 \renewcommand{\baselinestretch}{1.1}
\newcommand{\abs}[1]{\left\vert\:  #1 \:\right\vert}
\newcommand{\set}[1]{\left\lbrace\:  #1 \:\right\rbrace}
\newtheorem{Theorem}{\bf Theorem}[chapter]
\newtheorem{Remark}{\bf Remark}[chapter]
\newcommand{\R}{\mathbb{R}}
\newcommand{\E}{\mathbb{E}}
\newcommand{\U}{\mathcal{U}}
\newcommand{\eff}{\mathrm{eff}}
\DeclareMathOperator{\conv}{conv}
 \newcommand{\qed}{\nobreak \ifvmode \relax \else
      \ifdim\lastskip<1.5em \hskip-\lastskip
      \hskip1.5em plus0em minus0.5em \fi \nobreak
      \vrule height0.75em width0.5em depth0.25em\fi}
\newenvironment{proof}[1][\sc Proof:]{\begin{trivlist}\item[\hskip \labelsep {\bfseries #1}]}{\end{trivlist}}
\begin{document}


\renewcommand{\baselinestretch}{1.2}

\markright{
\hbox{\footnotesize\rm Statistica Sinica (2007): Preprint}\hfill
}

\markboth{\hfill{\footnotesize\rm HOLGER DETTE AND CHRISTINE KISS} \hfill}
{\hfill {\footnotesize\rm OPTIMAL DESIGNS FOR INVERSE REGRESSION MODELS} \hfill}

\renewcommand{\thefootnote}{}
$\ $\par


\fontsize{10.95}{14pt plus.8pt minus .6pt}\selectfont
\vspace{0.8pc}
\centerline{\large\bf Optimal experimental designs for inverse}
\vspace{2pt}
\centerline{\large\bf quadratic regression models}
\vspace{.4cm}
\centerline{Holger Dette, Christine Kiss}
\vspace{.4cm}
\centerline{\it Ruhr-Universität Bochum}
\vspace{.55cm}
\fontsize{9}{11.5pt plus.8pt minus .6pt}\selectfont


\begin{quotation}
\noindent {\it Abstract:}
In this paper optimal experimental designs for inverse quadratic
regression models
 are determined. We consider
 two  different parameterizations of the model and
 investigate local optimal designs with respect to the  $c$-, $D$- and $E$-criteria,
 which reflect various  aspects  of the precision of
  the maximum likelihood estimator for the parameters in inverse quadratic regression models.
In particular it is demonstrated that for a sufficiently large
 design space  geometric allocation rules are optimal with respect to many optimality criteria.
 Moreover, in numerous cases the designs with respect to the different criteria
 are supported at the same points. Finally,
  the efficiencies of different optimal designs with respect to various
  optimality criteria are studied, and the efficiency of
  some commonly used designs are investigated.
\par

\vspace{9pt}
\noindent {\it Key words and phrases:}
rational regression models, optimal designs, Chebyshev systems, $E-, c-, D-$optimality.
\par
\end{quotation}\par


\fontsize{10.95}{14pt plus.8pt minus .6pt}\selectfont

\setcounter{chapter}{1}
\setcounter{equation}{0} 
\noindent {\bf 1. Introduction}

Inverse polynomials  define a flexible family of nonlinear regression models which are used to describe the relationship between a response,
say $Y$, and a univariate predictor, say  $u$ [see eg. Nelder (1966)]. The model is defined by the  expected response
\begin{equation}
\label{mod1}
\E(Y|u)=\frac{u}{P_n(u,\theta)},\;u\geq 0,
\end{equation}
where $P_n(u,\theta)$ is a polynomial of degree  $n$ with coefficients $\theta_0,\dots,\theta_n$
 defining the shape of the curve.
Nelder (1966) compared the properties of inverse and  ordinary polynomial models
for analyzing data.
In contrast to ordinary polynomials inverse polynomial regression
models are  bounded and can be used to describe  a saturation effect,
in which case the response does not exceed a finite amount. Similarly,
a toxic effect can be produced,
in which case the response eventually falls to zero.

An important class of inverse polynomial models are defined by
 inverse quadratic regression models, which correspond to the case $n=2$ in (\ref{mod1}).
These models have numerous applications, in particular in chemistry and agriculture [see Ratkowski (1990), Sparrow (1979a, 1979b), Nelder (1960), Serchand, McNew,
Kellogg and Johnson (1995) and Landete-Castillejos and Gallego (2000) among others]. For example,
 Sparrow (1979a, 1979b) analyzed data from
 several series of experiments designed to study the relationship between crop yield
 and fertilizer input. He concluded that among several competing models the inverse quadratic
 model  produced the  best fit to data obtained from
  yields of  barley and grass crops. Similarly,
  Serchand et al. (1995) argued that  inverse polynomials can produce
a dramatically steep rise and might realistically describe lactation curves.

 While much attention has been paid to the construction of various optimal
 designs for the inverse linear or Michaelis Menten  model [see Song and Wong (1998), Lopez-Fidalgo and Wong (2002),
 Dette, Melas and Pepelyshev (2003), Dette and Biedermann (2003), among many others], optimal
 designs for the inverse quadratic regression model have not been studied in so much detail.
 Cobby, Chapman and Pike  (1986) determined local $D$-optimal designs numerically
 and Haines (1992) provided some analytical results for $D$-optimal designs in the inverse
 quadratic regression model. In particular, in these references it is demonstrated that geometric
 allocation rules are $D$-optimal.  The present paper is devoted to a more systematic study
 of local optimal  designs for inverse  quadratic models. We consider the $c$-, $D$-, $D_1$-
 and $E$-optimality criterion  and determine local optimal designs for two different
 parameterizations of the inverse quadratic regression model. In Section 2 we introduce two
 parameterizations of the inverse quadratic regression model and describe some basic facts
 of approximate design theory. In Section 3 we  discuss several $c$-optimal designs.
In particular  $D_1$-optimal designs are determined, which are of particular
 importance if discrimination between an inverse linear  and inverse quadratic model
 is one of the interests of the experiment. As a further special case of the $c$-optimality
 criterion we determine optimal extrapolation designs.
 Section 4 deals with the local $D$-optimality and  $E$-optimality criterion.
  It is shown
 that for all  criteria  under consideration geometric designs are local optimal, whenever the design
 space is sufficiently large.  We also determine the structure of the local optimal designs
 in the case of a bounded design space. These findings extend the observations made by
 Cobby, Chapman and Pike  (1986) and Haines (1992) for the $D$-optimality criterion to other optimality
 criteria, different design spaces and a slightly different
 inverse quadratic regression model.
\par

\bigskip
\setcounter{chapter}{2}
\setcounter{equation}{0} 
\noindent {\bf 2. Preliminaries}

We consider two parameterizations of the  inverse quadratic regression model
\begin{equation}\label{mo1a}
\E(Y|u)= \eta (u, \theta)~,
\end{equation}
where $ \theta = (\theta_0,\theta_1,\theta_2)^T$ denotes the vector
of unknown parameters and   the expected response  is given by
\begin{subequations}\label{parametrisierungen}
    \begin{align}
        \eta_1(u,\theta) =&\frac{u}{\theta_0+\theta_1u+\theta_2u^2}&\;
       \label{para3}\\
\mbox{or } ~~~ ~~~ ~~~ ~~~ ~~~~~~~~~~\nonumber ~~~ & ~~~ ~~~ &\\
        \eta_2(u,\theta)=&\frac{\theta_0 u}{\theta_1+u+\theta_2 u^2}&
         \label{para1}
    \end{align}
\end{subequations}
The explanatory variable varies in the interval  $\U=[s,t]$, where $s \ge 0$ and $0<s<t < \infty $,
or in the unbounded set $\U=[s,\infty) $ with $s \ge 0$.
The assumptions regarding  the parameters vary with the different parameterizations and
 should assure, that the numerator in (\ref{para3}) and (\ref{para1}) is positive on ${\cal U}$.
 Under such assumptions the  regression functions have  no points of discontinuity. Moreover, both functions are strictly  increasing to a maximum of size $(\theta_1+2\sqrt{\theta_0\theta_2})^{-1}$ at the point
 $u_{\max_1}=\sqrt{\theta_0/\theta_2}$ for parameterization (\ref{para3})
  and to a maximum of size $\theta_0(1+\sqrt{\theta_1\theta_2})^{-1}$ at the point
  $u_{\max_2}=\sqrt{\theta_1/\theta_2}$ for parameterization (\ref{para1})  and then the functions are strictly decreasing to a zero asymptote.
 A sufficient condition for the positivity of the numerator is
   $\theta_0,\theta_2>0$, $\abs{\theta_1}\leq 2\sqrt{\theta_0\theta_2}$ for model (\ref{para3})
   and $\theta_0,\theta_1,\theta_2>0,$ $2 \sqrt{\theta_1\theta_2}>1$ for model (\ref{para1}), respectively.
  We assume that at each $u \in \U$ a  normally distributed observation is  available with mean
  $\eta (u,\theta)$ and variance $ \sigma^2 >0$, where the function $\eta $ is either $\eta_1$ or $\eta_2$,
  and different observations are assumed to be independent.
An experimental design $\xi$ is a probability measure with finite support defined on the  set
  $\U$ [see Kiefer (1974)]. The information matrix of an experimental design $\xi$ is defined by
\begin{equation}
\label{infmat}
M(\xi,\theta)=\int_{\U}f(u,\theta)f^T(u,\theta)d\xi(u),
\end{equation}
where
\begin{equation}
\label{grad}
f(u,\theta) = \frac{\partial}{\partial \theta} \eta (u,\theta)
\end{equation}
denotes the gradient of the expected response with respect to the parameter $\theta$. For the two parameterizations (\ref{para3}) and
(\ref{para1}) the vectors of the partial derivatives are given by
\begin{equation}\label{grad3}
    f_{1}(u,\theta)=\frac{-u}{(\theta_0+\theta_1 u+\theta_2 u^2)^2}\left(1,u,u^2\right)^T
 \end{equation}
 and
\begin{equation}\label{grad1}
    f_{2}(u,\theta)=\frac{u}{\theta_1+u+\theta_2 u^2}\left(1,-\frac{\theta_0}{\theta_1+u+\theta_2 u^2},-\frac{\theta_0
  u^2}{\theta_1+u+\theta_2 u^2}\right)^T,
\end{equation}
respectively. \\
If $N$ observations can be made and the design $\xi$ concentrates mass $w_i$ at the points $u_i$, $i=1,\dots,r$,
 the quantities $w_iN$ are rounded to integers such that  $\sum_{j=1}^rn_i=N$ [see Pukelsheim and Rieder (1992)],
and the experimenter takes $n_i$ observations at each point $u_i$, $i=1,\dots,r$. If the sample size $N$ converges to infinity, then (under
appropriate assumptions of regularity)
 the covariance matrix of the maximum likelihood estimator for the parameter  $\theta$ is approximately
 proportional to the matrix $(\sigma^2/N)M^{-1}(\xi,\theta)$, provided that the inverse of the information matrix exists [see Jennrich (1969)].
 An optimal experimental design maximizes or minimizes an appropriate functional of the information matrix or its inverse,
and there are numerous optimality criteria which can be used to discriminate between
 competing designs [see Silvey (1980) or  Pukelsheim (1993)].
 In this paper we will investigate the $D$-optimality criterion, which maximizes the determinant of the inverse of the information matrix
 with respect to the design $\xi$, the $c$-optimality criterion, which minimizes the
 variance of the maximum likelihood estimate for the linear combination $c^T\theta$
and the $E$-optimality criterion, which maximizes the minimum eigenvalue of
 the information matrix $M(\xi,\theta)$.
\par

\bigskip
\setcounter{chapter}{3}
\setcounter{equation}{0} 
\noindent {\bf 3. Local $c$-optimal designs}

Recall that for a given vector $c \in \R^{3}$ a design $\xi_c$ is called $c$-optimal if the linear combination $c^T\theta$ is estimable by the
design $\xi_c$, that is $\mbox{Range}(c) \subset \mbox{Range}(M(\xi_c , \theta))$, and the design $\xi_c$ minimizes
\begin{equation}
\label{ccrit}
c^TM^-(\xi,\theta)c
\end{equation}
among all designs for which $c^T\theta$ is estimable,
where $M^-(\xi,\theta)$ denotes a generalized inverse of the matrix $M(\xi,\theta)$.
It is shown in Pukelsheim (1993) that the expression (\ref{ccrit}) does not depend on the
specific choice of the generalized inverse. Moreover,
a design $\xi_c$ is $c$-optimal if and
only if there exists a generalized inverse $G$ of $M(\xi_c,\theta)$ such that
the inequality
\begin{equation}
(f'(u,\theta)Gc)^2\leq c'M^-(\xi_c,\theta)c
\end{equation}
holds for all $ u\in\U$ [see Pukelsheim (1993)].
 A further important tool to determine $c$-optimal
designs is the theory of Chebyshev systems, which will be briefly described
here for the sake of
completeness.

Following Karlin and Studden (1966) a set of functions $\{g_0,\dots,g_n\}$ defined on the set $\U$ is  called Chebychev-system, if every
linear combination $\sum_{i=0}^na_ig_i(x)$ with $\sum_{i=0}^na_i^2>0$ has at most $n$ distinct roots on $\U$. This property is equivalent to
the fact that
    \begin{equation}\label{defkiefwolf}
        \det(g(u_0),\dots,g(u_n))\neq 0
    \end{equation}
holds for all $u_0,\dots,u_n \in \U$ with $u_i \neq u_j \ (i \neq j)$, where $g(u)=(g_0(u),\dots,g_n(u))^T$ denotes the vector of all functions
[see Karlin and Studden (1966)]. If the functions $g_0,\dots,g_n$ constitute a Chebyshev-system on  the set $\U$, then there exists a unique
``polynomial''
    \begin{equation}\label{eindpolynom}
        \phi(u):=\sum_{i=0}^n \alpha_i^* g_i(u) \ \ \ (\alpha^*_0, \dots, \alpha^*_n \in \mathbb{R})
    \end{equation}
with the following properties
    \begin{equation*}
        \begin{aligned}
            (i)&\qquad  \abs{\phi(u)}\leq 1\quad\forall u\in \cal{U}\\
            (ii)& \qquad \text{There exist } n+1 \text{ points } s_0<\dots<s_n \text{ such that } \phi(s_i)=(-1)^{n-i}\\
            &\qquad\text{ for }
                        i=0,\dots,n.
        \end{aligned}
    \end{equation*}
The function $\phi(u)$ is called the Chebychev-polynomial, and the points $s_0,\dots, s_n$ are called Chebychev-points, which are not
necessarily unique. Kiefer and Wolfowitz (1965) defined the set  $A^* \subset \R^{n+1}$ as the set
 of all vectors $c\in\R^{n+1}$  satisfying
\begin{equation} \label{astern}
\begin{vmatrix}
    g_0(x_1)&\dots &g_0(x_n)&c_0\\
    g_1(x_1)&\dots &g_1(x_n)&c_1\\
    \vdots& &\vdots&\vdots\\
    g_n(x_1)&\dots &g_n(x_n)&c_n\\
\end{vmatrix}\neq 0,
\end{equation}
whenever the points  $x_1,\ldots ,x_n  \in \U$ are distinct. They showed that for each $c\in A^*$
 the $c$-optimal design, which minimizes
 $$
 c^T \left(\int_{\cal{U}} g (u) g^T (u) d \xi (u)\right)^{-1} c
 $$
 among all designs on $\cal{U}$,
  is  supported by the entire set of the Chebychev-points $s_0,\dots,s_n$.
 The corresponding optimal weights  $w_0^*,\ldots ,w_n^*$
 can then easily be found using Lagrange multipliers
 and are given by
\begin{equation}\label{gewichtsvektor}
w_i^*=\frac{|v_i|}{\sum_{j=0}^n|v_j|}\qquad i=0,\dots,n,
\end{equation}
where the vector $v$ is defined by
\begin{equation*}
v=(XX^T)^{-1}Xc,
\end{equation*}
and the $(n+1)\times (n+1)$-matrix $X$ is given by $X=\left(g_{j}(s_{i})\right)_{i,j=0}^n$ [see also Pukelsheim and Torsney (1991)].

In the following discussion we will use these results to determine local optimal design for two specific goals in the data analysis with
inverse quadratic regression models: discrimination between inverse linear and quadratic models and extrapolation  or prediction at a specific
point $x_e$. We will begin with the discrimination problem, which has been extensively studied for ordinary polynomial regression models [see
Stigler (1971), Studden (1982) or Dette (1995), among many others]. To our knowledge the problem of constructing designs for the discrimination
between inverse rational models has not been studied in the literature. We consider the inverse quadratic regression model (\ref{para3}) and
are interested in determining a design, which can be used to discriminate between this and the inverse linear regression model
$$
\eta (u, \theta) = \frac{u}{\theta_0 + \theta_1u}.
$$
The decision, which model should be used could be based on the likelihood ratio test for the hypothesis $H_0: \theta_2 =0$ in the model
(\ref{para3}), and a standard calculation shows that the (asymptotic) power of this test is a decreasing function  of the quantity
(\ref{ccrit}), where the vector $c$ is given by $c=(0,0,1)^T$. Thus a design maximizing the power of the likelihood ratio test for
discriminating between the inverse linear and quadratic model is a local $c$-optimal design for the vector $c=(0,0,1)^T$. Following Stigler (1971) we
call this design local $D_1$-optimal. Our first results determine the local $D_1$-optimal design for the two parameterizations of the inverse
quadratic regression model explicitly.

\medskip
\begin{Theorem}\label{theoD13}
    The local $D_1$-optimal  design $\xi^*_{D_1}$ for the inverse quadratic regression
model (\ref{para3}) on the design space $\U=[0,\infty)$ is given by
        \begin{equation}\label{D1optpar3}
            \xi^*_{D_1}=
                \begin{pmatrix}
                    \frac{1}{\rho}\sqrt{\frac{\theta_0}{\theta_2}}&\sqrt{\frac{\theta_0}{\theta_2}}& \rho \sqrt{\frac{\theta_0}{\theta_2}} \\
                    w_0& w_1& 1-w_0-w_1
                \end{pmatrix}
        \end{equation}
    with weights
        \begin{align*}   w_0&=\Big(\sqrt{\theta_2}\theta_0+\theta_1\sqrt{\theta_0}\rho+\sqrt {\theta_2}\theta_0\rho^2\Big)^2/[(1+\rho)\lambda]\\    w_1&=\Big(2\sqrt{\theta_2}\theta_0+\theta_1\sqrt{\theta_0}\Big)^2\rho^2/\lambda\\
        \end{align*}and $\lambda=\theta_0\Big(\theta_0\theta_2\big(1+6\rho^2+\rho^4\big)+2\theta_1\rho\big(\theta_1\rho+\sqrt{\theta_0\theta_2}\big(1+\rho\big)^2\big)\Big)$.\\
    The geometric scaling factor $\rho$ is defined by
        \begin{equation}\label{scalefac}
            \rho= \rho(\gamma) = 1 + \frac{2 + \gamma}{\sqrt{2}} + \sqrt{2(1+\sqrt{2}) + (2+\sqrt{2})\gamma + \frac{\gamma^2}{2}}
        \end{equation}
    with $\gamma=\theta_1/\sqrt{\theta_0\theta_2}$.
    This design is also local $D_1$-optimal on the design space ${\cal U}=[s,t]$ ($0<s<t$), if the inequalities $0\leq s \leq
    \rho^{-1}\sqrt{\theta_0/\theta_2}$
    and $t \geq \rho \sqrt{\theta_0/\theta_2}$ are satisfied. \\
 The local $D_1$-optimal  design on the design space ${\cal U}=[s,t]$ for
    model (\ref{para3}) is of the form
        \begin{equation}\label{D1fall31}
                      \xi^*_{D_1}=
                  \begin{pmatrix}
                        s&u'_1&u'_2\\
                        w'_0&w'_1&1-w'_0-w'_1
                  \end{pmatrix},
        \end{equation}
if the inequalities $s \geq \rho^{-1}\sqrt{\theta_0/\theta_2}$ and $t > \rho
              \sqrt{\theta_0/\theta_2}$ hold, of the form
              \begin{equation}\label{D1fall32}
              \xi^*_{D_1}=
                    \begin{pmatrix}
                        u''_0&u''_1&t\\
                        w''_0&w''_1&1-w''_0-w''_1
                    \end{pmatrix},
                \end{equation}
if the inequalities $s < \rho^{-1}\sqrt{\theta_0/\theta_2}$ and $t \leq \rho
                \sqrt{\theta_0/\theta_2}$
are satisfied,  and is of the form
                \begin{equation}\label{D1fall33}
                \xi^*_{D_1}=
                    \begin{pmatrix}
                        s&u'''_1&t\\
                        w'''_0&w'''_1&1-w'''_0-w'''_1
                    \end{pmatrix},
                \end{equation}
if the inequalities $s \geq \rho^{-1}\sqrt{\theta_0/\theta_2}$ and $t \leq \rho \sqrt{\theta_0/\theta_2}$ hold.

\end{Theorem}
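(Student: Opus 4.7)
The plan is to cast the $D_1$-optimality problem as the Chebyshev approximation problem of Section 2 and then transfer the solution to the bounded design spaces via the equivalence theorem. Writing $f_1(u,\theta) = -u\,p(u)^{-2}(1,u,u^2)^T$ with $p(u)=\theta_0+\theta_1u+\theta_2u^2$, the information matrix equals $\int g(u)g^T(u)\,d\xi(u)$ for $g_i(u) = u^{i+1}/p(u)^2$, $i=0,1,2$. On $(0,\infty)$ the triple $\{g_0,g_1,g_2\}$ is a Chebyshev system, since any nonzero linear combination has numerator $uQ(u)$ with $\deg Q\le 2$ and therefore at most two positive roots. Furthermore $c=(0,0,1)^T$ lies in the Kiefer-Wolfowitz set $A^*$ of (\ref{astern}): the $3\times 3$ determinant expands to $x_1x_2(x_2-x_1)/(p(x_1)^2p(x_2)^2)$, which never vanishes for distinct positive $x_1,x_2$. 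The results recalled in Section 2 therefore guarantee that the local $D_1$-optimal design on $[0,\infty)$ is supported on the three Chebyshev points $s_0<s_1<s_2$ of the system, with weights given by (\ref{gewichtsvektor}).

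\textbf{Geometric structure from an involution.} Set $m=\sqrt{\theta_0/\theta_2}$. A direct calculation gives $p(m^2/u)=(\theta_0/\theta_2)u^{-2}p(u)$, so the candidate equioscillator $\psi(u)=uQ(u)/p(u)^2$ is carried by the involution $u\mapsto m^2/u$ into another function of the same form. Uniqueness of the Chebyshev polynomial (up to sign) then forces $\psi$ to be invariant under this involution, which carries the equioscillation set into itself. The only fixed point of the involution on $(0,\infty)$ is $m$, so exactly one equioscillation point equals $s_1=m$ and the other two are interchanged, giving $s_0 s_2 = m^2$; setting $\rho=s_2/m>1$ produces the geometric allocation $s_0=m/\rho$, $s_1=m$, $s_2=m\rho$ asserted in (\ref{D1optpar3}).

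\textbf{Computation of $\rho$.} Interior equioscillation with $\psi(s_0)=\psi(s_2)=1$, $\psi(s_1)=-1$, together with $\psi(0)=0$ and the known leading coefficients, forces the factorizations
\begin{equation*}
p(u)^2-uQ(u)=\theta_2^2(u-s_0)^2(u-s_2)^2, \qquad p(u)^2+uQ(u)=\theta_2(u-m)^2(\theta_2u^2+Lu+\theta_0)
\end{equation*}
for some $L\in\R$. Adding the two identities and substituting $\tau=u/m$, $\sigma=\rho+\rho^{-1}$, $\gamma=\theta_1/\sqrt{\theta_0\theta_2}$, $\mu=Lm/\theta_0$ yields
\begin{equation*}
2(\tau^2+\gamma\tau+1)^2 = (\tau^2-\sigma\tau+1)^2 + (\tau-1)^2(\tau^2+\mu\tau+1).
\end{equation*}
Matching the coefficients of $\tau^3$ and $\tau^2$ produces $\mu=2\sigma+4\gamma+2$ and $2\mu=\sigma^2-2\gamma^2$, whose elimination collapses to $(\sigma-2)^2 = 2(2+\gamma)^2$. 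Taking the positive branch gives $\sigma = 2+\sqrt{2}(2+\gamma)$, and $\rho=(\sigma+\sqrt{\sigma^2-4})/2$ rearranges directly to the closed form (\ref{scalefac}). Once the support points are known, (\ref{gewichtsvektor}) applied to $X=(g_j(s_i))$ reproduces the weights $w_0,w_1$ listed in (\ref{D1optpar3}) after a purely mechanical calculation.

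\textbf{Bounded design spaces.} If $[s_0,s_2]\subseteq[s,t]$, the equivalence inequality verified on $[0,\infty)$ remains valid on the subinterval, so the same design is $D_1$-optimal on $[s,t]$. Otherwise at least one of $s_0,s_2$ lies outside $[s,t]$ and is replaced by the relevant boundary, which accounts for the three configurations (\ref{D1fall31})-(\ref{D1fall33}); in each case the remaining interior support point and the weights are determined by the first-order conditions for the $c$-criterion restricted to three-point designs with the prescribed boundary support, and optimality is certified by the equivalence inequality on $[s,t]$. The principal obstacle throughout is the third step: isolating $\rho$ from the two coefficient equations and simplifying the nested radical to (\ref{scalefac}); by comparison the Chebyshev-system reduction and the boundary analysis are routine.
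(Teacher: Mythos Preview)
Your argument for the unconstrained design space is correct and follows a genuinely different route from the paper. The paper first writes the criterion as $T(\tilde u,\theta)=|M(\xi,\theta)|/|\tilde M(\xi,\theta)|$, imposes the geometric ansatz $u_0=u_1/r$, $u_2=ru_1$ without justification, solves the resulting first-order conditions for $u_1^*=\sqrt{\theta_0/\theta_2}$ and a quartic in $r$, and only afterwards closes the gap by constructing the equioscillating combination $t(u)$ explicitly and checking that its extrema are precisely the three claimed points. You instead deduce the geometric structure \emph{a priori} from the involution $u\mapsto m^2/u$, which preserves both the span of the $g_i$ and the sup norm and hence must fix the Chebyshev polynomial and permute its equioscillation set; the odd cardinality forces $s_1=m$ and $s_0s_2=m^2$. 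Your determination of $\rho$ via the two factorizations $p^2-uQ=\theta_2^2(u-s_0)^2(u-s_2)^2$ and $p^2+uQ=\theta_2(u-m)^2(\theta_2u^2+Lu+\theta_0)$, leading to the single relation $(\sigma-2)^2=2(2+\gamma)^2$, is shorter and more transparent than the paper's route through the partial derivatives of $T$. The involution argument in particular \emph{explains} why the support is geometric, whereas the paper only \emph{verifies} it after the fact.

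The one place where your write-up is thinner than the paper's is the bounded interval $[s,t]$. The paper does not simply assert that a support point ``is replaced by the relevant boundary''; it shows that the reduced objective $\bar T(u_0)=T(u_0,u_1^*,u_2^*,\theta)$ is unimodal on $(0,u_1^*)$ by counting the real roots of an auxiliary quartic $P_4$, so that once $s$ exceeds the unconstrained optimizer $u_0^*$ the constrained maximum necessarily sits at $s$. Your sentence about first-order conditions and the equivalence inequality gestures at this but does not supply the monotonicity step; to match the theorem's claims (\ref{D1fall31})--(\ref{D1fall33}) you would need to add an argument of that type.
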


\begin{proof}
The proof is performed in three steps:
\begin{enumerate}
\item[(A)] At first we identify a candidate for the local $D_1$-optimal  design on the interval $[0,\infty)$ using the theory of Chebyshev polynomials.
\item[(B)] We use the properties of the Chebyshev polynomial (\ref{eindpolynom}) to prove
the local $D_1$-optimality of this candidate.
\item[(C)] We consider the case of a bounded design space and determine how the constraints interfere with the support points of the local optimal design on the unbounded design space.
\end{enumerate}\vspace{2ex}
{\bf (A):} Let $f(u,\theta)$ be the vector of the partial derivatives in parameterization (\ref{para3}) defined in \eqref{grad3}. It is easy to
see, that the components of the vector $f_{1}(u,\theta)$, say
  $\set{f_{10}(u,\theta),f_{11}(u,\theta),f_{12}(u,\theta)}$,
constitute a Chebyshev-system   on any  bounded interval $[s,t] \subset (0, \infty) $. 
Furthermore for $y_0,y_1>0$ with $y_0\neq y_1$ we get
                $$\begin{vmatrix}
                    f_{10}(y_0,\theta)&f_{10}(y_1,\theta)&0\\
                    f_{11}(y_0,\theta)&f_{11}(y_1,\theta)&0\\
                    f_{12}(y_0,\theta)&f_{12}(y_1,\theta)&1
                \end{vmatrix}
                \neq0 ,$$
and it follows that the vector $(0,0,1)^T$ is an element of the set
$A^*$ defined in (\ref{astern}). Therefore we obtain from the
results of Kiefer and Wolfowitz (1965), that the local $D_1$-optimal design is supported on the entire set of Chebyshev-points $\{u_0^*,u_1^*,u_2^*\}$ of the Chebyshev-system $\{f_{10}(u,\theta),f_{11}(u,\theta),f_{12}(u,\theta)\}$. If the support points are given, say
 $u_0,u_1,u_2$ the corresponding
weights can be determined by
    \eqref{gewichtsvektor} such that  the function defined in \eqref{ccrit} is maximal. \\
Consequently the $D_1$-optimality criterion can be expressed as a function of the points
 $u_0,u_1,u_2$, which
 will now be optimized analytically. For this purpose we obtain by a tedious  computation
\begin{equation}\label{D1tomax3}
    \begin{aligned}
        T(\tilde{u},\theta):=\frac{|{M(\xi,\theta)}|}{|{\tilde{M}(\xi,\theta)}|}
                =&u_0^2(u_0-u_1)^2u_1^2(u_1-u_2)^2u_2^2/N
    \end{aligned}
\end{equation}
where ${\tilde{M}(\xi,\theta)}$ denotes the matrix obtained from $M(\xi,\theta)$
by deleting the last row and column,
$\tilde{u}=(u_0,u_1,u_2), \theta=(\theta_0,\theta_1,\theta_2)$ and
\begin{equation*}
    \begin{aligned}
        N=\;&(4\theta_0 u_0 u_1(\theta_1+\theta_2 u_1)u_2+\theta_0^2(u_1(u_2-u_1)+u_0(u_1+u_2))\\
        &+u_0 u_1 u_2(2\theta_1^2 u_1+2\theta_1 \theta_2(u_0(u_1-u_2)+u_1(u_1+u_2))\\
        &+ \theta_2^2(u_0^2(u_1-u_2)+u_0(u_1-u_2)u_2+u_1(u_1^2+u_2^2))))^2.
    \end{aligned}
\end{equation*}
The support points of the $D_1$-optimal design are obtained by maximizing the function
$T(\tilde{u},\theta)$ with respect to $u_0,u_1,u_2$. The necessary conditions
for a maximum yield the
following system of nonlinear equations
            \begin{align}
                \frac{\partial T}{\partial u_0}(\tilde{u},\theta)=\;
                &4\theta_0u_0^2u_1(\theta_1+\theta_2u_1)u_2+\theta_0^2(-2u_0u_1(u_1-u_2)\nonumber\\
                &+u_1^2(u_1-u_2)+u_0^2(u_1+u_2))+u_0^2u_1u_2(2\theta_1^2u_1+4\theta_1\theta_2u_1^2\nonumber\\
                &+\theta_2^2(2u_0u_1(u_1-u_2)+u_0^2(u_2-u_1)+u_1^2(u_1+u_2))) \cdot R_1=0 ,\label{D1abl31}\\
                \frac{\partial T}{\partial u_1}(\tilde{u},\theta)=\;
                &4\theta_0u_0u_1^2u_2(-(\theta_2u_1^2)+\theta_2u_0u_2+\theta_1(u_0-2u_1+u_2))\nonumber\\
                &+\theta_0^2(u_1^2(u_1-u_2)^2-2u_0u_1(u_1^2+u_1u_2-u_2^2)+u_0^2(u_1^2+2u_1u_2-u_2^2))\nonumber\\
                &-u_0u_1^2u_2(2\theta_1^2(u_1^2-u_0u_2)+4\theta_1\theta_2u_1(u_0(u_1-2u_2)+u_1u_2)\nonumber\\
                &+\theta_2^2(u_0^2(u_1-u_2)^2+2u_0u_1(u_1^2-u_1u_2-u_2^2)\nonumber\\
                &+u_1^2(-u_1^2+2u_1u_2+u_2^2))) \cdot R_2=0 ,\label{D1abl32}\\
                \frac{\partial T}{\partial u_2}(\tilde{u},\theta)=\;
                &4\theta_0u_0u_1(\theta_1+\theta_2u_1)u_2^2
                +\theta_0^2(u_1(u_1-u_2)^2+u_0(-u_1^2+2u_1u_2+u_2^2))\nonumber\\
                &+u_0u_1u_2^2(2\theta_1^2u_1+4\theta_1\theta_2u_1^2
                +\theta_2^2(u_0(u_1-u_2)^2\nonumber\\
                &+u_1(u_1^2+2u_1u_2-u_2^2))) \cdot R_3=0 ,\label{D1abl33}
            \end{align}
where  $R_1,R_2$ and $R_3$  are rational functions, which do not
 vanish  for all $u_0,u_1,u_2$ with $0<u_0<u_1<u_2$.
    In order to solve this system of equations, we assume that
\begin{equation}\label{proportional}
 u_0=\frac{u_1}{r}~,~~
 u_2=r\cdot u_1
 \end{equation}
  holds for some factor  $r >1$, which will be specified later.
    Inserting this expression in \eqref{D1abl32} provides as the only positive  solution
        $u_1^*=\sqrt{\theta_0/\theta_2}$.
    Substituting this term into \eqref{D1abl31} or \eqref{D1abl33} yields
the following equation for the factor $r$
        \begin{equation*}
            2\theta_1(\theta_1+4\sqrt{\theta_0\theta_2})r^2-\theta_0\theta_2(1-4r-2r^2-4r^3+r^4)=0
        \end{equation*}
with four roots given by
        \begin{equation}\label{D1nst3}
            r_{1/2}=1 \pm \frac{(2 + \gamma)}{\sqrt{2}} \pm \sqrt{2(1+\sqrt{2}) + (2+\sqrt{2})\gamma + \frac{\gamma^2}{2}},
        \end{equation}
        \begin{equation*}
            r_{3/4}=1 \pm \frac{(2 + \gamma)}{\sqrt{2}} \mp \sqrt{2(1+\sqrt{2}) + (2+\sqrt{2})\gamma + \frac{\gamma^2}{2}},
        \end{equation*}
where  $\gamma=\sqrt{\theta_1\theta_2}^{-1}$.
    The factor $r$ has to be strict greater 1 according to our assumption on the relation between $u_0,u_1$ and $u_2$. This provides only the first solution in \eqref{D1nst3} and
the geometric scaling factor is given by (\ref{scalefac}).
Therefore it remains to justify assumption (\ref{proportional}), which will be done in the second part
of the proof.

\medskip

{\bf (B)} Because the calculation of the support points  $\rho^{-1}\sqrt{\theta_0/\theta_2}$, $\sqrt{\theta_0/\theta_2}$
and $\rho\sqrt{\theta_0/\theta_2}$ in step (A) is based on assumption (\ref{proportional}), we still have prove that these points are
the support points of  the local $D_1$-optimal design. For this purpose  we show  that the unique oscillating polynomial defined by
\eqref{eindpolynom} attends minima and maxima exactly in these support points. Recall that the vector of the partial derivatives of the
regression function $ f_{1}(u,\theta)=\left(f_{10}(u,\theta),f_{11}(u,\theta),f_{12}(u,\theta)\right) $ is given by (\ref{grad3}).
We now define a polynomial $t(u)$ by
\begin{equation}\label{anderespolynom}
t(u)=f_{10}(u,\theta)+\alpha_1f_{11}(u,\theta)+\alpha_2f_{12}(u,\theta)
\end{equation}
and determine the factors $\alpha_1$ and $\alpha_2$  such
that it is equioscillating, i.e.
\begin{subequations}
    \begin{align}
        &t'(u_i^*)=0 &i=0,1,2 \label{erstebed}\\
        &t(u_i^*)=c(-1)^{i-1}&i=0,1,2\label{zweitebed}
    \end{align}
\end{subequations}
for some constant $c\in \R$.
By this choice  the polynomial $t(u)$ must be proportional to the polynomial $\phi(u)$
defined in (\ref{eindpolynom}). For the determination of the coefficients
we  differentiate the polynomial $t(u)$ and get
\begin{equation}\label{ableitt}
t'(u)=\frac{-(\theta_0(1+2u\alpha_1+3u^2\alpha_2))+u(\theta_1(1-u^2\alpha_2)+\theta_2u(3+2u\alpha_1+u^2\alpha_2))}{(\theta_0+u(\theta_1+\theta_2u))^3}
\end{equation}
Substituting the support points $u_1^*=\sqrt{\theta_0/\theta_2}$ and $u_2^*=\rho\sqrt{\theta_0/\theta_2}$ in  \eqref{ableitt} we obtain
from (\ref{erstebed}) two equations
    \begin{eqnarray*}
        0 & =& \frac{\sqrt{\theta_0}(\theta_1+2\sqrt{\theta_0\theta_2})(\theta_2-\theta_0\alpha_2)}
        {\sqrt{\theta_2}\theta_2}\\
        0 & =& \frac{\sqrt{\theta_0}\theta_2(\theta_1\rho+\sqrt{\theta_0\theta_2}
        (3\rho^2-1)+2\theta_0\rho(\rho^2-1)\alpha_1)}{\sqrt{\theta_2}\theta_2}\\ &&+ \frac{\sqrt{\theta_0}\theta_0\rho^2(-(\theta_1\rho)+
  \sqrt{\theta_0\theta_2}(\rho^2-3))\alpha_2}{\sqrt{\theta_2}\theta_2}.
    \end{eqnarray*}
The solution with respect to  $\alpha_1$ and $\alpha_2$ is given by
\begin{equation*}
    \alpha_1=-\frac{\sqrt{\theta_0\theta_2}-\theta_1\rho+\sqrt{\theta_0\theta_2}\rho^2}{2\theta_0\rho}~,
~~\alpha_2=\frac{\theta_2}{\theta_0},
\end{equation*}
which yields for the polynomial $t(u)$ and its derivate
\begin{align}
t(u)\;\;=\;\;&\frac{u(-2\theta_0\rho+\sqrt{\theta_0\theta_2}(1+\rho^2)u-\rho u(\theta_1+2\theta_2u))}{2\theta_0\rho(\theta_0+u(\theta_1+\theta_2u))^2}, \nonumber\\
t'(u)\;\;=\;\;& -\frac{(\sqrt{\theta_0}-\sqrt{\theta_2}u)(\sqrt{\theta_0}\rho-\sqrt{\theta_2}u)(\sqrt{\theta_0}+\sqrt{\theta_2}u)(\sqrt{\theta_0}-\sqrt{\theta_2}\rho u)}{\theta_0\rho(\theta_0+u(\theta_1+\theta_2u))^3}\label{ableittfertig}
\end{align}
respectively. A straightforward calculation shows that
the third support point $u_0^*=\rho^{-1}\sqrt{\theta_0/\theta_2}$
 satisfies $t^\prime (u_0^*)=0$ and  that the three equations in   \eqref{zweitebed}
are  satisfied. Therefore
 it only remains to prove, that the inequality
  $\abs{t(u)}\leq c$ holds on the interval $[0,\infty)$.
In this case  the polynomial $t(u)$ must be proportional to the equioscillating polynomial $\phi(u)$
and the  design with support points
 $\rho^{-1}\sqrt{\theta_0/\theta_2}$, $\sqrt{\theta_0/\theta_2}$ and $\rho\sqrt{\theta_0/\theta_2}$ and optimal
weights is local $D_1$-optimal. \\
Observing the representation  \eqref{ableittfertig} shows that the equation $t'(u)=0$ is equivalent to
\begin{equation}
(\sqrt{\theta_0}-\sqrt{\theta_2}u)(\sqrt{\theta_0}\rho-\sqrt{\theta_2}u)(\sqrt{\theta_0}+\sqrt{\theta_2}u)(\sqrt{\theta_0}-\sqrt{\theta_2}\rho u)=0
\end{equation}
with roots
    $$
    n_0=-\sqrt{\frac{\theta_0}{\theta_2}},\;
    n_1=\frac{1}{\rho}\sqrt{\frac{\theta_0}{\theta_2}},\;
    n_2=\sqrt{\frac{\theta_0}{\theta_2}}\text{ and }
    n_3=\rho\sqrt{\frac{\theta_0}{\theta_2}}.$$
Therefore the function $t(u)$ has exactly three extrema on $\R^+$. Furthermore if
$u\rightarrow\infty$, we have
$t(u)\rightarrow0$  and it follows  that $\abs{t(u)}\leq c$ holds for all $u\geq 0$.
 Consequently, the functions $t(u)$ and $\phi(u)$ are proportional and  the points $u_0^*=\rho^{-1}\sqrt{\theta_0/\theta_2}$, $u_1^*=\sqrt{\theta_0/\theta_2}$, $u_2^*=\rho\sqrt{\theta_0/\theta_2}$ are the support points of
 the local $D_1$-optimal design.
The explicit construction of the weights $w_0$ and $w_1$ is obtained  by substituting the support
points $u_0^*$, $u_1^*$ and $u_2^*$ into \eqref{gewichtsvektor}.\\[2ex]
{\bf (C)} We finally consider  the cases \eqref{D1fall31}, \eqref{D1fall32} and \eqref{D1fall33} in the second part of Theorem \ref{theoD13},
which correspond to a bounded design space. For the sake of brevity we restrict ourselves to the case \eqref{D1fall31}, all other cases are
treated similarly. Obviously the assertion follows from  the existence of a point $u_0^*>0$, such that the function $T(u_0,u_1^*,u_2^*,\theta)$
is increasing in $u_0$ on the interval $(0,u_0^*)$ and decreasing on
    $(u_0^*,u_1^*)$. \\
For a proof of this property  we fix $u_1,u_2$, and note that the function  $\bar{T}(u_0):=T(u_0,u_1,u_2,\theta)$ has minima in $u_0=0$ and
$u_0=u_1$, since the inequality $\bar{T}(u_0)\geq 0$ holds for all
    $u_0\in[0,u_1]$ and $\bar{T}(0)=\bar{T}(u_1)=0$.  Because $\bar{T}(u_0)$ is not constant, there is
at least one maximum in  the interval $(0,u_1)$.  In order to prove that there
 is exactly one maximum, we calculate
        \begin{equation}\label{gleich1}
            \bar{T}'(u_0)=\frac{\partial T}{\partial u_0}(u_0,u_1,u_2,\theta)
            =2u_0(u_0-u_1)u_1^2(u_1-u_2)^2u_2^2\frac{P_4(u_0)}{P_{9}(u_0)},
        \end {equation}
where $P_{9}$ is a polynomial of degree 9 (which is in the following discussion without interest) and the polynomial $P_4$ in the numerator is
given by
    \begin{equation*}
        \begin{aligned}
            P_4(u_0)=\;&4\theta_0u_0^2u_1(\theta_1+\theta_2u_1)u_2
            +\theta_0^2(-2u_0u_1(u_1-u_2)+u_1^2(u_1-u_2)\\&+u_0^2(u_1+u_2))
            +u_0^2u_1u_2(2\theta_1^2u_1+4\theta_1\theta_2u_1^2+\theta_2^2(2u_0u_1(u_1-u_2)\\&+u_0^2
            (u_2-u_1)+u_1^2(u_1+u_2))).
        \end{aligned}
    \end{equation*}
The roots of the function $\bar{T}'$ are given by the roots of the polynomial $P_4$. Differentiating this polynomial yields the function
        \begin{equation*}
            \begin{aligned}
                \frac{\partial P_4}{\partial u_0}(u_0)=\;&
                8\theta_0u_0u_1(\theta_1+\theta_2u_1)u_2+2\theta_0^2(u_1(u_2-u_1)+u_0(u_1+u_2))\\
                &+2u_0u_1u_2(2\theta_1^2u_1+4\theta_1\theta_2u_1^2+
                \theta_2^2(-2u_0^2(u_1-u_2)\\&+3u_0u_1(u_1-u_2)+u_1^2(u_1+u_2))),
            \end{aligned}
        \end{equation*}
which has only one real root. Consequently   $P_4(u_0)$ has just one extremum and therefore at most two roots.
The case of no roots has been excluded above. If  $P_4(u_0)$ would have  two roots, then the function $\bar{T}(u_0)$
has at most two extrema in the interval $(0,u_1)$. However, the function $\bar{T}(u_0)$ is equal to zero in the
 two points $0$ and $u_1$ and in the interval $(0,u_1)$ strictly positive. Therefore the number of its
extrema has to be odd and $\bar{T}(u_0)$ has exactly one maximum on $(0,u_1)$,
which is attained for given
    $(u_1,u_2)=(u_1^*,u_2^*)$ at  a point $u_0^*\in(0,u_1^*)$.\\
Assume that the design space is of the form
$\U=[s,t]$. If the inequality $s<u_0^*$ holds, \eqref{D1optpar3} remains the local $D_1$-optimal  design. However
 if  the inequality $s>u_0^*$ holds, the function $\bar{T}(u_0)$ is maximal in $s$, and it follows that \eqref{D1fall31} is the local $D_1$-optimal  design.\\
\end{proof}
\begin{Remark}
{\rm Note that part (A) of the  proof essentially follows  the arguments presented in Haines (1992) for the $D$-optimality criterion, who
considered the model
\begin{equation*}
\eta(u,\alpha,\beta_0,\beta_1,\beta_2)=\frac{u+\alpha}{\beta_0+\beta_1(u+\alpha)+\beta_2(u+\alpha)^2}.
\end{equation*} However,
the proof presented by Haines (1992)
is not complete, because she  did neither justify the use of the geometric design, nor proves that the system
of necessary conditions has only one solution. In this paper we present a tool for closing  this gap, as demonstrated in part (B) of the
preceding proof. \\
It is also worthwhile to mention that an analogue of Theorem \ref{theoD13} does not hold in
 the four-parameter model
discussed in  Haines (1992). For example  if  $\beta_0=\beta_2=1$, $\beta_1=-1.8$ and $\alpha=0.1$  we obtain by numerical computation that
   the local $D_1$-optimal design is supported at the  Chebyshev-points $\set{0,0.6272,0.9861,1.8714}$ and there  does not exist
   a similar geometric spacing behaviour as in the models considered in this paper. }
\hfill $\Box$
\end{Remark}

The following theorem states the corresponding results for the inverse quadratic regression model with parameterization (\ref{para1}). The
proof is similar to the proof of the previous theorem and therefore omitted.

\medskip
\begin{Theorem}\label{theoD11}
    The local $D_1$-optimal  design $\xi^*_{D_1}$ for the inverse quadratic regression
model (\ref{para1}) on the design space $\U=[0,\infty)$ is given by
        \begin{equation}\label{D1optpar1}
            \xi^*_{D_1}=
                \begin{pmatrix}
                    \frac{1}{\rho}\sqrt{\frac{\theta_1}{\theta_2}}&\sqrt{\frac{\theta_1}{\theta_2}}&\rho \sqrt{\frac{\theta_1}{\theta_2}} \\
                    w_0& w_1& 1-w_0-w_1
                \end{pmatrix}
        \end{equation}
    with
        \begin{align*}                           w_0=&(\sqrt{\theta_2}\theta_1+\sqrt{\theta_1}\rho+\sqrt{\theta_2}\theta_1\rho^2)^2(1+\sqrt{\theta_1\theta_2}(1+\rho))/[(1+\rho)\lambda]\\
          w_1=&(2\theta_1+\sqrt{\theta_1\theta_2})^2\rho(\rho+\sqrt{\theta_1\theta_2}(1+\rho^2))/\lambda
        \end{align*}
        and
        \begin{align*}
        \lambda=\;&\theta_1(\rho(2\rho+3\sqrt{\theta_1\theta_2}(1+\rho)^2)
       +\theta_1\theta_2(1+2\sqrt{\theta_1\theta_2}(1+\rho)^2(1+\rho^2)\\
        &+\rho(8+\rho(6+\rho(8+\rho))))).
        \end{align*}
The geometric scaling factor $\rho$ is given  by (\ref{scalefac})
    with $\gamma=(\sqrt{\theta_1\theta_2})^{-1}$.
      This design is also local $D_1$-optimal on the design space ${\cal U}=[s,t]$ ($0<s<t$), if the inequalities
      $0\leq s \leq
    \rho^{-1}\sqrt{\theta_1/\theta_2}$
    and $t \geq \rho \sqrt{\theta_1/\theta_2}$ are satisfied.\\
 The local $D_1$-optimal  design on the design space ${\cal U}=[s,t]$ for
the inverse quadratic regression model (\ref{para1}) is of the form (\ref{D1fall31})
if the inequalities $s \geq \rho^{-1}\sqrt{\theta_1/\theta_2}$ and      $t>\rho\sqrt{\theta_1/\theta_2}$ hold, of the form (\ref{D1fall32})
if the inequalities $s < \rho^{-1}\sqrt{\theta_1/\theta_2}$ and  $t\leq                     \rho\sqrt{\theta_1/\theta_2}$ are satisfied
 and is of the form (\ref{D1fall33})
if the inequalities $s \geq \rho^{-1}\sqrt{\theta_1/\theta_2}$ and $t \leq \rho\sqrt{\theta_1/\theta_2}$ hold.
\end{Theorem}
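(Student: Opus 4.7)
The plan is to mirror the three-step structure (A)--(C) from the proof of Theorem \ref{theoD13}, but applied to the gradient $f_2(u,\theta)$ in \eqref{grad1}. First I would verify that $\{f_{20},f_{21},f_{22}\}$ is a Chebyshev system on any $[s,t]\subset(0,\infty)$: after factoring out the positive common denominator $(\theta_1+u+\theta_2u^2)^{-2}$, the system reduces to $\{u(\theta_1+u+\theta_2u^2),\,u,\,u^3\}$, which spans the same space as $\{u,u^2,u^3\}$ and hence is Chebyshev on $(0,\infty)$. A direct $3\times 3$ determinant evaluation, analogous to the one for $f_1$, then shows $(0,0,1)^T\in A^*$, so by Kiefer--Wolfowitz the local $D_1$-optimal design on $[0,\infty)$ is supported on the three Chebyshev points.

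Next I would compute $T(\tilde u,\theta)=|M(\xi,\theta)|/|\tilde M(\xi,\theta)|$ as in \eqref{D1tomax3} (with a different numerator/denominator reflecting $f_2$), set up the three stationarity equations in $u_0,u_1,u_2$, and impose the geometric ansatz $u_0=u_1/r,\ u_2=ru_1$ from \eqref{proportional}. The symmetric middle equation should collapse to $u_1^*=\sqrt{\theta_1/\theta_2}$, and substituting this back into either of the other two equations should yield, after simplification, exactly the quartic in $r$ whose positive root $>1$ is the $\rho(\gamma)$ defined in \eqref{scalefac} with $\gamma=(\sqrt{\theta_1\theta_2})^{-1}$. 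To justify the ansatz---this is part (B)---I would form
\begin{equation*}
t(u)=f_{20}(u,\theta)+\alpha_1 f_{21}(u,\theta)+\alpha_2 f_{22}(u,\theta),
\end{equation*}
impose $t'(u_1^*)=t'(u_2^*)=0$ to solve linearly for $\alpha_1,\alpha_2$, then check that $t'(u_0^*)=0$ and that $t(u_i^*)$ alternates in sign. Factoring $t'(u)$ should produce a product of four linear factors in $\sqrt{\theta_1/\theta_2}$ and $\rho\sqrt{\theta_1/\theta_2}$ (with one negative root outside $[0,\infty)$), showing that $t$ has exactly three extrema on $(0,\infty)$; combined with $t(u)\to 0$ as $u\to\infty$ and $t(0)=0$, this gives $|t(u)|\le c$ globally, so $t$ is proportional to the Chebyshev polynomial \eqref{eindpolynom} and the candidate is optimal. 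The weights follow from \eqref{gewichtsvektor} applied to the matrix $X=(f_{2j}(u_i^*,\theta))_{i,j=0}^2$.

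Finally, for the bounded design space (part (C)), I would fix $u_1,u_2$ and show $\bar T(u_0)=T(u_0,u_1,u_2,\theta)$ vanishes at $0$ and $u_1$ and has a unique maximum in $(0,u_1)$ by writing $\bar T'(u_0)$ as (an explicit positive factor)$\,\times P_4(u_0)/P_N(u_0)$ and checking that $P_4'$ has a unique real root, which bounds the number of sign changes of $\bar T'$; the three boundary cases \eqref{D1fall31}--\eqref{D1fall33} then follow by placing the appropriate support point at $s$ or $t$ whenever the interior optimizer falls outside $[s,t]$. The main obstacle I expect is the algebraic bookkeeping in part (B): the extra power of $u$ and the $\theta_0$ factor in $f_{21},f_{22}$ produce a bulkier equioscillating polynomial than in Theorem \ref{theoD13}, so confirming that its derivative factors cleanly into four linear terms (and that only $\rho^{\pm 1}\sqrt{\theta_1/\theta_2}$ and $\pm\sqrt{\theta_1/\theta_2}$ appear as roots) is where a computer algebra check will be essential; once that factorization is secured, the monotonicity argument for the bounded design space is identical to part (C) of the previous theorem.
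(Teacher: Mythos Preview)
Your proposal is correct and follows exactly the approach the paper intends: the authors explicitly state that the proof of Theorem~\ref{theoD11} ``is similar to the proof of the previous theorem and therefore omitted,'' and your plan faithfully reproduces the (A)--(C) structure of that proof for the gradient $f_2$, including the Chebyshev-system verification, the geometric ansatz, the equioscillation check, and the boundary analysis. Your reduction of $\{f_{20},f_{21},f_{22}\}$ to a system spanning $\{u,u^2,u^3\}$ is correct, and your anticipation that the algebra in part~(B) is the only nontrivial hurdle matches the spirit of the paper's remark.
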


In the following discussion we concentrate on the problem of extrapolation
in the inverse quadratic regression model. An optimal design for this purpose
minimizes the variance of the estimate of the expected response
at a point $x_e$ and is therefore $c$-optimal for the vector
$c_e=f_{1}(x_e,\theta)$ in the case of  parameterization (\ref{para3}), and for
the vector
$c_e=f_{2}(x_e,\theta)$ in the case of parameterization (\ref{para1}), respectively.
If $x_e$ is an element of the design space ${\cal U}$ it is obviously
optimal to take all observations at the point $x_e$, and therefore we assume for the remaining
part of this section that  ${\cal U} =[s,t]$, where $0\le s < t$ and $0< x_e <s$ or $x_e>t$.
The following result specifies local optimal extrapolation designs
for the inverse quadratic regression model
which are called local $c_e$-designs in the following discussion.
The proofs are similar to the proofs for $D_1$-optimality  and therefore
omitted.

\medskip
\begin{Theorem}\label{theoc_e3}
Assume that ${\cal U} =[s,t]$, where $0\le s < t$ and $0< x_e<s$ or $x_e>t$, and let $\rho$ denote the
 geometric scaling factor defined in (\ref{scalefac}) with $\gamma=\theta_1(\sqrt{\theta_0\theta_2})^{-1}$.
  If $0\leq s \leq \rho^{-1}
                \sqrt{\theta_0/\theta_2}$ and $ t \geq \rho
                \sqrt{\theta_0/\theta_2}$,
                then the local $c_e$-optimal  design $\xi^*_{c_e}$ for the inverse quadratic regression
model (\ref{para3}) is given by
        \begin{equation}\label{c_eoptpar3}
            \xi^*_{c_e}=
                \begin{pmatrix}
                    \frac{1}{\rho}\sqrt{\frac{\theta_0}{\theta_2}}&\sqrt{\frac{\theta_0}{\theta_2}}&\rho \sqrt{\frac{\theta_0}{\theta_2}} \\
                    w_0& w_1& 1-w_0-w_1
                \end{pmatrix}
        \end{equation}
where
        \begin{align*}
            w_0=&\Big(\sqrt{\theta_0}-x_e\sqrt{\theta_2}\Big)\Big(-x_e\sqrt{\theta_2}+\sqrt{\theta_0}
                \rho\Big)
                \Big(\theta_0\sqrt{\theta_2}+\theta_1\sqrt{\theta_0}\rho + \theta_0\sqrt{\theta_2}\rho^2\Big)^2\\
                &\times((1+\rho)\lambda)^{-1}\\
        w_1=&\Big(2\theta_0\sqrt{\theta_2}+\theta_1\sqrt{\theta_0}\Big)^2\rho
            \Big(-x_e\sqrt{\theta_2}+\sqrt{\theta_0}\rho\Big)
            \Big(\sqrt{\theta_0}-x_e\sqrt{\theta_2}\rho\Big)/\lambda\\
        \end{align*}
        with
        \begin{align*}
        	\lambda=&\;\theta_0\Big(\theta_0^2\theta_2\big(1+6\rho^2+\rho^4\big)+
                \theta_0\Big(2\theta_1^2\rho^2+2\theta_1\rho\big(\sqrt{\theta_0\theta_2}
                \big(1+\rho\big)^2
                -4x_e\theta_2\big(1+\rho^2\big)\big)\\
                &+\theta_2x_e\big(-2\sqrt{\theta_0\theta_2}
                \big(1+\rho\big)^2\big(1+\rho^2\big)+x_e\theta_2\big(1+6\rho^2+\rho^4\big)\big)\Big)\\
                &+ \theta_1x_e\rho\Big(2\sqrt{\theta_0\theta_2}\theta_2x_e
                \big(1+\rho\big)^2-\theta_1\big(\sqrt{\theta_0\theta_2}
                +\rho\big(-2x_e\theta_2+\sqrt{\theta_0\theta_2}\big(2+\rho\big)\big)\big)\Big)\Big)
         \end{align*}
The local $c_e$-optimal  design for the inverse quadratic
    model (\ref{para3}) is of the form (\ref{D1fall31})
if the inequalities $s \geq \rho^{-1}\sqrt{\theta_0/\theta_2}$ and $t > \rho
              \sqrt{\theta_0/\theta_2}$ hold, of the form (\ref{D1fall32})
if the inequalities $s < \rho^{-1}\sqrt{\theta_0/\theta_2}$ and $t \leq \rho
                \sqrt{\theta_0/\theta_2}$ are satisfied  and of the form
    (\ref{D1fall33})
if the inequalities $s \geq \rho^{-1}\sqrt{\theta_0/\theta_2}$ and $t \leq \rho \sqrt{\theta_0/\theta_2}$ hold.
\end{Theorem}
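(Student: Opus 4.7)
The strategy is to mirror the three-part structure of the proof of Theorem \ref{theoD13}, exploiting that the Chebyshev system $\{f_{10}(\cdot,\theta),f_{11}(\cdot,\theta),f_{12}(\cdot,\theta)\}$ on $[0,\infty)$ and its Chebyshev points $\rho^{-1}\sqrt{\theta_0/\theta_2}$, $\sqrt{\theta_0/\theta_2}$, $\rho\sqrt{\theta_0/\theta_2}$ have already been identified in parts (A) and (B) of that proof. These points depend only on the Chebyshev system and the design space, not on the particular vector $c$ entering the $c$-optimality criterion, so the only things that really change when passing from $D_1$-optimality to extrapolation optimality are the verification that $c_e=f_1(x_e,\theta)$ lies in the Kiefer--Wolfowitz set $A^*$ and the computation of the corresponding weights.

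For the $A^*$ step I would form, for arbitrary distinct $x_1,x_2\in\U$, the $3\times 3$ determinant defined in (\ref{astern}) with $c=c_e$. Each of its three columns equals the scalar $-u/(\theta_0+\theta_1 u+\theta_2 u^2)^2$ evaluated at the corresponding node, times the vector $(1,u,u^2)^T$. Factoring these scalars out leaves a Vandermonde determinant in $x_1,x_2,x_e$ that vanishes only if two of these points coincide. Since $x_e$ is excluded from $\U$ by assumption while $x_1,x_2\in\U$, the three nodes are automatically distinct and positive, so the determinant is nonzero and $c_e\in A^*$. The Kiefer--Wolfowitz theorem then forces the local $c_e$-optimal design on $[0,\infty)$ to be supported on exactly the three Chebyshev points identified in Theorem \ref{theoD13}.

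To obtain the weights I would substitute these support points and $c_e=f_1(x_e,\theta)$ into the Lagrange-multiplier formula (\ref{gewichtsvektor}), namely $v=(XX^T)^{-1}Xc_e$, and normalize by $\sum_j|v_j|$. The resulting rational expressions simplify to the stated formulas, and I would verify non-negativity of all three weights using their factorizations through $(\sqrt{\theta_0}-x_e\sqrt{\theta_2})$ and $(\sqrt{\theta_0}\rho-x_e\sqrt{\theta_2})$, whose signs are constant on each connected component of $(0,\infty)\setminus\U$ under the hypothesis $0\le s\le \rho^{-1}\sqrt{\theta_0/\theta_2}$ and $t\ge\rho\sqrt{\theta_0/\theta_2}$ of the first part of the theorem.

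For the bounded design space cases (\ref{D1fall31})--(\ref{D1fall33}) I would copy part (C) of Theorem \ref{theoD13} verbatim in structure: fixing two of the three Chebyshev support points at their unbounded-case values, the $c_e$-criterion becomes a rational function of the third support point whose derivative admits a unique zero inside each relevant interval, so the constrained optimum is attained at the appropriate boundary point of $[s,t]$ whenever the unbounded optimizer falls outside this interval. The main obstacle is likely the weight derivation in the middle step, since tracking the dependence on $x_e$ through the rational expressions is bookkeeping-heavy; the $A^*$ verification and the boundary analysis are structural and transfer essentially verbatim from the $D_1$-optimality argument.
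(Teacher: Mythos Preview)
Your proposal is correct and follows exactly the approach the paper intends: the authors omit the proof, stating only that it is ``similar to the proofs for $D_1$-optimality,'' and your three-part outline (verify $c_e\in A^*$ via the Vandermonde reduction, reuse the Chebyshev points from Theorem \ref{theoD13}, compute weights from (\ref{gewichtsvektor}), then repeat part (C) for the bounded cases) is precisely that argument made explicit.
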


\medskip
\begin{Theorem}
Assume that ${\cal U} =[s,t]$, where $0\le s < t$ and $0< x_e<s$ or $x_e>t$, and let $\rho$ denote the
 geometric scaling factor $\rho$ defined in (\ref{scalefac})    with $\gamma=(\sqrt{\theta_1\theta_2})^{-1}$.
  If $0\leq s \leq \rho^{-1}
                \sqrt{\theta_1/\theta_2}$ and $ t \geq \rho
                \sqrt{\theta_1/\theta_2}$,
                then the  local $c_e$-optimal  design $\xi^*_{c_e}$ for the inverse quadratic regression
model (\ref{para1}) on the design space $\U=[0,\infty)$ is given by
        \begin{equation}
            \xi^*_{c_e}=
                \begin{pmatrix}
                    \frac{1}{\rho}\sqrt{\frac{\theta_1}{\theta_2}}&\sqrt{\frac{\theta_1}{\theta_2}}&\rho \sqrt{\frac{\theta_1}{\theta_2}} \\
                    w_0& w_1& 1-w_0-w_1
                \end{pmatrix}
        \end{equation}
    with
        \begin{align*}
            w_0=&\Big(\sqrt{\theta_1}-x_e\sqrt{\theta_2}\Big)\Big(-x_e\sqrt{\theta_2}+\sqrt{\theta_1}
            \rho\Big)\Big(\theta_1\sqrt{\theta_2}+\sqrt{\theta_1}\rho+\theta_1\sqrt{\theta_2}\rho^2\Big)^2\\
            &\times((1+\rho\big)\lambda)^{-1}\\
        w_1=&\Big(2\theta_1\sqrt{\theta_2}+\sqrt{\theta_1}\Big)^2\rho\Big(-x_e\sqrt{\theta_2}+
            \sqrt{\theta_1}\rho\Big)\Big(\sqrt{\theta_1}-x_e\sqrt{\theta_2}\rho\Big)
            /\lambda\\
        \end{align*}
      with
      	\begin{align*}
      		\lambda =\;&\theta_1\Big(\theta_1^2\theta_2\big(1+6\rho^2+\rho^4\big)+x_e
      		\rho\Big(-\sqrt{\theta_1\theta_2}
      	  +2\sqrt{\theta_1\theta_2}\theta_2x_e\big(1+\rho\big)^2\\
      	  &-\rho\big
  				(-2x_e\sqrt{\theta_2}+\sqrt{\theta_1\theta_2}\big(2+\rho\big)\big)\Big)
  			+\theta_1\Big(2\rho^2+2\rho\big(\sqrt{\theta_1\theta_2}\big(1+\rho\big)^2\\
  				&-4x_e\sqrt{\theta_2}\big(1+\rho^2\big)\big)+x_e
                \big(-2\sqrt{\theta_1\theta_2}\theta_2\big(1+\rho\big)^2\big(1+\rho^2\big)\\
                &+x_e\theta_2^2\big(1+6\rho^2+\rho^4\big)\big)\Big)\Big)
        \end{align*}
    If the design space is given by a finite interval $[s,t]$, $0<s<t$, then the local $c_e$-optimal  design for
    model (\ref{para3}) is of the form (\ref{D1fall31}),
if the inequalities $s \geq \rho^{-1}\sqrt{\theta_1/\theta_2}$ and $t > \rho
              \sqrt{\theta_1/\theta_2}$ hold, of the form (\ref{D1fall32}),
if the inequalities $s < \rho^{-1}\sqrt{\theta_1\theta_2}$ and $t \leq \rho
                \sqrt{\theta_1/\theta_2}$ are satisfied,  and of the form (\ref{D1fall33})
if the inequalities $s \geq \rho^{-1}\sqrt{\theta_1/\theta_2}$ and $t \leq \rho \sqrt{\theta_1/\theta_2}$ hold.
\end{Theorem}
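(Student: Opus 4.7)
The plan is to mirror the three-step argument used in the proof of Theorem \ref{theoD13}, with the vector $c$ now replaced by the extrapolation vector $c_e=f_2(x_e,\theta)$ and the regression model replaced by parameterization (\ref{para1}).

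\textbf{Step (A): Chebyshev system and application of Kiefer--Wolfowitz.} First I would verify that the components $\{f_{20}(u,\theta),f_{21}(u,\theta),f_{22}(u,\theta)\}$ of the gradient in (\ref{grad1}) form a Chebyshev system on any subinterval of $(0,\infty)$; this reduces, after pulling out the common positive factor $u/(\theta_1+u+\theta_2u^2)$, to the Chebyshev property of $\{1,u,u^2\}$. Next I would check that $c_e=f_2(x_e,\theta)$ lies in the Kiefer--Wolfowitz set $A^\ast$ of (\ref{astern}): since $x_e\notin\U$, the relevant $4\times 4$ determinant (columns $f_2(x_1,\theta),\ldots,f_2(x_3,\theta)$ augmented by $c_e$) is, up to a nonzero factor, a Vandermonde determinant in $x_1,x_2,x_3,x_e$ and thus nonvanishing. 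By the theorem of Kiefer and Wolfowitz (1965) the local $c_e$-optimal design is then supported on the complete set of Chebyshev points of the system on $\U$.

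\textbf{Step (B): Explicit identification of the support.} Since the Chebyshev points depend only on the regression system and not on $c$, they must coincide with the points identified in Theorem \ref{theoD11}, namely $\rho^{-1}\sqrt{\theta_1/\theta_2}$, $\sqrt{\theta_1/\theta_2}$, $\rho\sqrt{\theta_1/\theta_2}$ with $\rho$ as in (\ref{scalefac}) for $\gamma=(\sqrt{\theta_1\theta_2})^{-1}$. To make the argument self-contained I would construct $t(u)=f_{20}(u,\theta)+\alpha_1 f_{21}(u,\theta)+\alpha_2 f_{22}(u,\theta)$, solve the equioscillation conditions $t'(u_1^\ast)=t'(u_2^\ast)=0$ for $\alpha_1,\alpha_2$, and then factor $t'(u)$ (as in (\ref{ableittfertig})) to read off its real roots as $\pm\sqrt{\theta_1/\theta_2}$, $\rho^{-1}\sqrt{\theta_1/\theta_2}$ and $\rho\sqrt{\theta_1/\theta_2}$. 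Together with $t(u)\to 0$ as $u\to\infty$, this yields the global bound $|t(u)|\le c$ on $[0,\infty)$, whence $t$ must be proportional to the Chebyshev polynomial $\phi$ of (\ref{eindpolynom}).

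\textbf{Step (C): Weights and restriction to $[s,t]$.} With the support fixed, the optimal weights follow from formula (\ref{gewichtsvektor}) applied to $c=c_e=f_2(x_e,\theta)$; the stated formulas for $w_0,w_1,\lambda$ are then a direct, if tedious, evaluation of $v=(XX^T)^{-1}Xc_e$. For bounded design spaces I would mimic part (C) of the proof of Theorem \ref{theoD13}: fixing two of the support points at their unbounded-optimal values, the objective $T(\,\cdot\,,\theta)$ in the remaining variable is non-negative, vanishes at the endpoints of the relevant subinterval, and has a derivative whose numerator is a low-degree polynomial with a single real critical point, forcing a unique interior maximum. The three cases (\ref{D1fall31})--(\ref{D1fall33}) then arise according to whether $s$ exceeds $\rho^{-1}\sqrt{\theta_1/\theta_2}$ or $t$ falls below $\rho\sqrt{\theta_1/\theta_2}$.

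The principal obstacle is Step (B): although the scheme is identical to the one already carried out for parameterization (\ref{para3}), the gradient $f_2$ has a different algebraic structure (the components share a single, not a squared, rational factor), so the coefficients $\alpha_1,\alpha_2$ and the factorization of $t'(u)$ must be recomputed, and one must confirm that exactly the three positive roots lie in $[0,\infty)$ with the remaining real root negative. The membership $c_e\in A^\ast$ for \emph{extrapolation} points $x_e$ outside $\U$ also deserves explicit checking, since at interior points $x_e$ the trivial one-point design would be optimal and the Chebyshev argument would not apply.
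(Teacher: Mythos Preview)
Your proposal is correct and follows precisely the strategy the paper itself indicates (the proof is omitted there as ``similar to the proofs for $D_1$-optimality''); note only the small slip that the Kiefer--Wolfowitz determinant (\ref{astern}) is $3\times 3$ here (two points $x_1,x_2\in\U$ plus the column $c_e=f_2(x_e,\theta)$), not $4\times 4$, and the Vandermonde reduction still applies. Since the Chebyshev points depend only on the system $\{f_{2i}\}$ and the interval, not on the vector $c$, your Step~(B) and the support part of Step~(C) can in fact be replaced by a direct appeal to Theorem~\ref{theoD11}.
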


 Note that for  a sufficiently large design interval
 all designs presented in this section are supported at the same points, the
 Chebyshev points corresponding to the Chebyshev system of  the components
 of the gradient of the regression function. In the next section we will demonstrate that these
 points are also the support points of the local $E$-optimal design for the inverse quadratic regression
 model.\par

\bigskip
 \setcounter{chapter}{4}
\setcounter{equation}{0} 
\setcounter{Theorem}{0}
\noindent {\bf 4. Local $D$- and $E$-optimal designs}
We begin stating the corresponding result for the $D$-optimality criterion. The proof is omitted
because it requires
arguments which are similar  as those presented in Haines (1992) and in the proof
of Theorem \ref{theoD13}.

\medskip
\begin{Theorem}\label{dpara3}
The local $D$-optimal  design $\xi^*_{D}$ for the inverse quadratic regression
model (\ref{para3}) on the design space $\U=[0,\infty)$ is given by
        \begin{equation}\label{Doptpar3}
            \xi^*_{D}=
                \begin{pmatrix}
                    \frac{1}{\rho}\sqrt{\frac{\theta_0}{\theta_2}}&\sqrt{\frac{\theta_0}{\theta_2}}&\rho \sqrt{\frac{\theta_0}{\theta_2}} \\
                    \frac{1}{3}&\frac{1}{3} &\frac{1}{3}
                \end{pmatrix}
        \end{equation}
with the geometric scaling factor
        \begin{equation}\label{geo}
                \rho=\frac{\delta+\sqrt{\delta^2-4}}{2},
        \end{equation}
where the constants $\delta$ and $\gamma$ are defined by $\delta=(1/2)(\gamma+1+\sqrt{\gamma^2+6\gamma+33})$ and
$\gamma=\theta_1(\sqrt{\theta_0\theta_2})^{-1}$, respectively.   This design is also local $D$-optimal on the design space ${\cal U}=[s,t]$ ($0<s<t$), if the inequalities $0\leq s \leq
    \rho^{-1}\sqrt{\theta_0/\theta_2}$
    and $t \geq \rho \sqrt{\theta_0/\theta_2}$ are satisfied.\\
 The local $D$-optimal  design on the design space ${\cal U}=[s,t]$ for the
inverse quadratic regression model (\ref{para1}) is of the form (\ref{D1fall31}),
if the inequalities $s \geq \rho^{-1}\sqrt{\theta_1/\theta_2}$ and $t > \rho
                \sqrt{\theta_1/\theta_2}$ hold, of the form
(\ref{D1fall32}),
if the inequalities $s < \rho^{-1}\sqrt{\theta_1/\theta_2}$ and $t \leq \rho
                \sqrt{\theta_1/\theta_2}$ are satisfied,  and is of the form (\ref{D1fall33}),
if the inequalities $s \geq \rho^{-1}\sqrt{\theta_1/\theta_2}$ and $t \leq \rho
                \sqrt{\theta_1/\theta_2}$ hold.
\end{Theorem}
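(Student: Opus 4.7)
The argument parallels the three-part structure of the proof of Theorem \ref{theoD13}: first identify a candidate design on $[0,\infty)$ using the theory of Chebyshev systems, then verify its local D-optimality via the Kiefer--Wolfowitz general equivalence theorem, and finally treat the bounded design spaces by a monotonicity argument.

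\textbf{Step (A).} The three components of the gradient
\[
f_1(u,\theta) = \frac{-u}{(\theta_0+\theta_1 u+\theta_2 u^2)^2}(1,u,u^2)^T
\]
form a Chebyshev system on any subinterval of $(0,\infty)$, as was established in the proof of Theorem \ref{theoD13}. For a three-dimensional Chebyshev system the $D$-optimal design on an interval is supported at exactly three points, and the factorization $\det M(\xi,\theta) = (\prod_{i=0}^{2} w_i)(\det F)^2$ with $F=(f_1(u_0,\theta),f_1(u_1,\theta),f_1(u_2,\theta))$ forces, under the constraint $\sum w_i=1$, the optimal weights to be $w_0=w_1=w_2=1/3$. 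Pulling the scalar prefactors out of the columns of $F$ and evaluating the remaining Vandermonde determinant reduces the problem to maximizing
\begin{equation*}
\Psi(u_0,u_1,u_2) = \frac{u_0^2 u_1^2 u_2^2 (u_1-u_0)^2 (u_2-u_1)^2 (u_2-u_0)^2}{\prod_{i=0}^{2}(\theta_0+\theta_1 u_i+\theta_2 u_i^2)^4}
\end{equation*}
over $0<u_0<u_1<u_2$. Imposing the geometric ansatz $u_0 = u_1/r$, $u_2 = r u_1$ with $r>1$ (as in Theorem \ref{theoD13}), the first-order condition in $u_1$ should again yield $u_1^* = \sqrt{\theta_0/\theta_2}$ by symmetry. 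The remaining first-order conditions then reduce, via the symmetric substitution $\delta = r+1/r$, to a quadratic in $\delta$; solving this quadratic and selecting the root compatible with $r>1$ gives the values of $\delta$ and $\rho$ stated in the theorem.

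\textbf{Step (B).} To justify the geometric ansatz I would invoke the equivalence theorem: the design $\xi^*_D$ is local $D$-optimal on $[0,\infty)$ iff the sensitivity function $d(u,\xi^*_D) := f_1^T(u,\theta) M^{-1}(\xi^*_D,\theta) f_1(u,\theta)$ satisfies $d(u,\xi^*_D)\leq 3$ on $[0,\infty)$, with equality at the three support points. After explicit inversion of $M(\xi^*_D,\theta)$ for the candidate design, I expect the numerator of $3-d(u,\xi^*_D)$ to factor (up to a positive rational factor) as
\[
(\sqrt{\theta_0}-\sqrt{\theta_2}u)^2(\sqrt{\theta_0}\rho-\sqrt{\theta_2}u)^2(\sqrt{\theta_0}-\sqrt{\theta_2}\rho u)^2,
\]
mirroring the factorization \eqref{ableittfertig} obtained in the $D_1$-case, with the three claimed support points appearing as double roots. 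Combined with the decay $d(u,\xi^*_D)\to 0$ as $u\to\infty$, this delivers the required inequality on all of $[0,\infty)$ and confirms D-optimality.

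\textbf{Step (C).} For the bounded case $\U=[s,t]$, follow step (C) of the proof of Theorem \ref{theoD13}: fix two of the support points at their unconstrained optimal values, view $\Psi$ as a function of the third, and verify via root-counting on the numerator of its derivative that there is a unique maximum in the interior of the admissible interval. Whenever that interior maximum lies outside $[s,t]$, the optimal support point is pulled to the nearest endpoint, producing the three cases \eqref{D1fall31}--\eqref{D1fall33}. The main obstacle I anticipate is the algebraic work in step (A): showing that the \emph{a priori} high-degree polynomial equation in $r$ arising from the first-order conditions collapses to the simple quadratic in $\delta = r+1/r$ behind (\ref{geo}). The symmetry $u_0 u_2 = u_1^2$ of the geometric ansatz is what makes this collapse possible, as it forces the nonlinear system to depend on the outer support points only through the symmetric combination $r+1/r$.
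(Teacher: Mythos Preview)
Your three-step outline matches what the paper intends: the proof of Theorem~\ref{dpara3} is in fact omitted there, with the remark that it ``requires arguments which are similar as those presented in Haines (1992) and in the proof of Theorem~\ref{theoD13}.'' Your use of the Kiefer--Wolfowitz equivalence theorem in Step~(B) is the natural substitute for the Chebyshev-polynomial equioscillation argument used for the $D_1$-criterion, since $D$-optimality is not a $c$-criterion; Steps~(A) and~(C) transfer essentially verbatim.

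One correction to your Step~(B): the numerator of $3-d(u,\xi^*_D)$ has degree~$8$ in $u$, not~$6$. Writing $q(u)=\theta_0+\theta_1 u+\theta_2 u^2$, one has $d(u,\xi^*_D)=u^2 q(u)^{-4}\,g(u)$ with $g(u)=(1,u,u^2)M^{-1}(\xi^*_D,\theta)(1,u,u^2)^T$ a quartic, so $3q(u)^4-u^2 g(u)$ has leading term $3\theta_2^4 u^8$. The three support points, being maxima of $d$, contribute six roots (each double); the remaining quadratic factor must still be shown nonnegative on $[0,\infty)$. By analogy with the extra factor $(\sqrt{\theta_0}+\sqrt{\theta_2}u)$ in~\eqref{ableittfertig}, one should expect this residual factor to be a positive multiple of $(\sqrt{\theta_0}+\sqrt{\theta_2}u)^2$, which is manifestly positive and closes the argument. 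With this small adjustment your plan is sound.
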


\medskip
\begin{Theorem}
The local $D$-optimal  design $\xi^*_{D}$ for the inverse quadratic regression
model (\ref{para1}) on the design space $\U=[0,\infty)$ is given by
        \begin{equation}\label{Doptpar1}
            \xi^*_{D}=
                \begin{pmatrix}
                    \frac{1}{\rho}\sqrt{\frac{\theta_1}{\theta_2}}&\sqrt{\frac{\theta_1}{\theta_2}}&\rho \sqrt{\frac{\theta_1}{\theta_2}} \\
                    \tfrac{1}{3}&\tfrac{1}{3} &\tfrac{1}{3}
                \end{pmatrix}
        \end{equation}
with the geometric scaling factor $\rho$ is given by (\ref{geo})
with $\gamma=(\sqrt{\theta_1\theta_2})^{-1}$.
  This design is also $D$-optimal on the design space ${\cal U}=[s,t]$ ($0<s<t$), if
  the inequalities $0\leq s \leq
    \rho^{-1}\sqrt{\theta_1/\theta_2}$
    and $t \geq \rho \sqrt{\theta_1/\theta_2} $ are satisfied. \\
 The local $D$-optimal  design on the design space ${\cal U}=[s,t]$ for
  the inverse quadratic regression
model (\ref{para1}) is of the form (\ref{D1fall31}),
if the inequalities $s \geq \rho^{-1}\sqrt{\theta_1/\theta_2}$ and $t > \rho
                \sqrt{\theta_1/\theta_2}$ hold, of the form
    (\ref{D1fall32}),
if the inequalities $s < \rho^{-1}\sqrt{\theta_1/\theta_2}$ and $t \leq \rho
                \sqrt{\theta_1/\theta_2}$  are satisfied, and is of the form
    (\ref{D1fall33}),
if the inequalities $s \geq \rho^{-1}\sqrt{\theta_1/\theta_2}$ and $t \leq \rho
                \sqrt{\theta_1/\theta_2}$ hold.
\end{Theorem}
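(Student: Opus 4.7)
The plan is to mirror the three-step structure of the proof of Theorem \ref{theoD13}, adapted to parameterization (\ref{para1}) and the $D$-optimality criterion, and parallel to the proof (omitted) of the previous theorem.

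\textbf{(A)} Because the model has three unknown parameters and the components of $f_2(u,\theta)$ in (\ref{grad1}) form a Chebyshev system on any bounded subinterval of $(0,\infty)$ (a direct check, writing $D(u)=\theta_1+u+\theta_2 u^2$, on the numerator polynomial $a_0 D(u)-a_1\theta_0-a_2\theta_0 u^2$), the local $D$-optimal design on $\mathcal{U}=[0,\infty)$ is supported on exactly three points with equal weights $1/3$. For support $0<u_0<u_1<u_2$, a column factorization of the matrix $F=(f_2(u_0),f_2(u_1),f_2(u_2))$ followed by a Vandermonde manipulation yields
\[
\det M(\xi,\theta)=\frac{\theta_0^4}{27}\,\frac{u_0^2 u_1^2 u_2^2\,(u_0-u_1)^2(u_0-u_2)^2(u_1-u_2)^2}{D(u_0)^4 D(u_1)^4 D(u_2)^4}.
\]
Following the successful Ansatz (\ref{proportional}), I would then impose $u_0=u_1/r$, $u_2=r u_1$ with $r>1$; the condition $\partial\det M/\partial u_1=0$ isolates the positive solution $u_1^*=\sqrt{\theta_1/\theta_2}$, and substitution into $\partial\det M/\partial u_0=0$ produces a quartic in $r$ whose only root with $r>1$ is the $\rho$ defined by (\ref{geo}) with $\gamma=(\sqrt{\theta_1\theta_2})^{-1}$.

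\textbf{(B)} To confirm that the candidate $\xi_D^*$ with support $\{\rho^{-1}\sqrt{\theta_1/\theta_2},\sqrt{\theta_1/\theta_2},\rho\sqrt{\theta_1/\theta_2}\}$ and uniform weights is indeed local $D$-optimal, and simultaneously to justify the geometric Ansatz used in (A), I would apply the Kiefer--Wolfowitz equivalence theorem: $\xi_D^*$ is local $D$-optimal iff
\[
d(u,\xi_D^*,\theta):=f_2^T(u,\theta)\,M^{-1}(\xi_D^*,\theta)\,f_2(u,\theta)\leq 3
\]
for all $u\in[0,\infty)$, with equality at $u_0^*,u_1^*,u_2^*$. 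Using the explicit block structure of $M(\xi_D^*,\theta)$ inherited from (A), I would derive a representation of the form
\[
3-d(u,\xi_D^*,\theta)=\frac{(u-u_0^*)^2(u-u_1^*)^2(u-u_2^*)^2\,Q(u)}{D(u)^2},
\]
where $Q$ is an explicit polynomial of low degree. Since $d(u,\xi_D^*,\theta)\to 0$ as $u\to\infty$, the limit on the right is $3>0$, pinning the sign of the leading coefficient of $Q$; uniform nonnegativity of $Q$ on $[0,\infty)$ then delivers the required global inequality.

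\textbf{(C)} For a bounded design space $\mathcal{U}=[s,t]$, I would repeat the monotonicity analysis of part (C) of the proof of Theorem \ref{theoD13}: fix any two of the three support points and regard $\det M$ as a function of the remaining coordinate. The determinant vanishes at the two coincident-point boundaries of the admissible interval and is strictly positive in the interior, and, as in the display (\ref{gleich1}), $\partial\det M/\partial u_i$ factors through a numerator polynomial whose own derivative has at most one real root. This forces at most one interior critical point, so when the unconstrained optimum from (A) lies outside $[s,t]$ the constrained maximum is attained at the corresponding boundary of $[s,t]$, which produces the three cases (\ref{D1fall31}), (\ref{D1fall32}) and (\ref{D1fall33}).

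The hard part will be the polynomial inequality $Q(u)\geq 0$ in step (B). The coefficients of $Q$ depend on $\rho$ through the quartic of step (A), so I expect the positivity to become tractable only after eliminating $\rho$ via the defining relation $\rho+\rho^{-1}=\delta$ with $\delta$ as in the theorem, thereby reducing the problem to a univariate inequality in $u$ whose coefficients are polynomials in $\gamma$ alone. Once this algebraic reduction is in place, the remaining verifications in (A) and (C) are computationally tedious but templated on the proof of Theorem \ref{theoD13}.
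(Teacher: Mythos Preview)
Your three-step plan matches what the paper intends: the proof is omitted there with a pointer to Haines (1992) and to the proof of Theorem \ref{theoD13}, and your adaptation---replacing the Chebyshev-polynomial verification in step (B) by the Kiefer--Wolfowitz equivalence theorem, which is the natural tool for $D$- rather than $c$-optimality---is exactly the right move. Steps (A) and (C) are templated correctly on the earlier proof.

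One algebraic slip in (B): since each component of $f_2(u,\theta)$ carries a factor $u/D(u)^2$, the variance function $d(u,\xi_D^*,\theta)$ has denominator $D(u)^4$, not $D(u)^2$. The correct factorization is
\[
3-d(u,\xi_D^*,\theta)=\frac{(u-u_0^*)^2(u-u_1^*)^2(u-u_2^*)^2\,Q(u)}{D(u)^4},
\]
where the numerator $3D(u)^4-u^2P_4(u)$ has degree $8$, so $Q$ is a quadratic. Its leading coefficient is $3\theta_2^4>0$, consistent with your limit argument, and the remaining nonnegativity check on $[0,\infty)$ then goes through as you outline.
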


\medskip
We will conclude this section with the discussion of the {$E$-optimality} criterion. For this
purpose recall that a
 design $\xi_E$ is local $E$-optimal if and only if there exists a matrix $E\in \conv(S)$ such that
 the inequality
\begin{equation}
f'(u,\theta)Ef(u,\theta)\leq \lambda_{\min} \qquad \text{  }
\end{equation}
holds for all $u\in\U$,
where $\lambda_{\min}$ denotes the minimum eigenvalue of the matrix $M(\xi_E,\theta)$
and
\begin{equation}
S=\set{zz'\;|\;\|z\|_2=1, ~z \text{ is an eigenvector of } M(\xi_E,\theta) \text{ corresponding to } \lambda_{\min}}.
\end{equation}
The following two results specify the local $E$-optimal designs for the inverse quadratic regression
models with parameterization (\ref{para3}) and (\ref{para1}). Because both statements are proved similarly,
we restrict ourselves to a proof of the first theorem.

\medskip
\begin{Theorem}\label{theoe3}
    The local $E$-optimal  design $\xi^*_{E}$ for the inverse quadratic regression
model (\ref{para3}) on the design space $\U=[0,\infty)$ is given by
        \begin{equation}\label{eoptpar3}
            \xi^*_{E}=
                \begin{pmatrix}
                    \frac{1}{\rho}\sqrt{\frac{\theta_0}{\theta_2}}&\sqrt{\frac{\theta_0}{\theta_2}}&\rho \sqrt{\frac{\theta_0}{\theta_2}} \\
                    w_0& w_1& 1-w_0-w_1
                \end{pmatrix}
        \end{equation}
where the weights $w_0$, $w_1$ are given by (\ref{gewichtsvektor}) and $c$ is the vector with components given by the coefficients of the
Chebyshev polynomial, that is
 \begin{align*}
    c=&\;\Big(-\frac{\sqrt{\theta_0}(2\theta_1^2\rho^2 + 2\sqrt{\theta_0}\theta_1\sqrt{\theta_2}\rho(1 + \rho)^2 + \theta_0\theta_2
            (1 + 6\rho^2 + \rho^4))}{\sqrt{\theta_2}(-1+ \rho)^2\rho},\\[1ex]
        &\;\frac{\theta_1^2\rho(1 + \rho)^2 + 8\sqrt{\theta_0}\theta_1\sqrt{\theta_2}\rho(1 + \rho^2) + 2\theta_0\theta_2(1 + \rho)^2(1 + \rho^2)}
            {(-1 + \rho)^2\rho},\\[1ex]
        &\;-\frac{\sqrt{\theta_2}(2\theta_1^2\rho^2 + 2\sqrt{\theta_0}\theta_1\sqrt{\theta_2}\rho(1 + \rho)^2 + \theta_0\theta_2(1 + 6\rho^2 + \rho^4))}
            {\sqrt{\theta_0}(-1 + \rho)^2\rho}\Big)^T.
 \end{align*}
The geometric scaling factor is given  by (\ref{scalefac})
    with $\gamma=\theta_1(\sqrt{\theta_0\theta_2})^{-1}$.  This design is also local $E$-optimal on the design space ${\cal U}=[s,t]$ ($0<s<t$),
    if the inequalities $0\leq s \leq
    \rho^{-1}\sqrt{\theta_0/\theta_2}$
    and $t \geq \rho \sqrt{\theta_0/\theta_2}$ are satisfied.\\
 The local $E$-optimal  design on the design space ${\cal U}=[s,t]$ for
    model (\ref{para3}) is of the form (\ref{D1fall31}),
if the inequalities $s \geq \rho^{-1}\sqrt{\theta_0/\theta_2}$ and $t > \rho
              \sqrt{\theta_0/\theta_2}$ hold, of the form
    (\ref{D1fall32}),
if the inequalities $s < \rho^{-1}\sqrt{\theta_0/\theta_2}$ and $t \leq \rho
                \sqrt{\theta_0/\theta_2}$  are satisfied, and of the form
    (\ref{D1fall33}),
if the inequalities $s \geq \rho^{-1}\sqrt{\theta_0/\theta_2}$ and $t \leq \rho \sqrt{\theta_0/\theta_2}$ hold.
\end{Theorem}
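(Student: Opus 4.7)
The plan is to reduce the $E$-optimality problem on $\U=[0,\infty)$ to a $c$-optimality problem with $c$ equal (up to normalization) to the coefficient vector of the equioscillating polynomial $t(u)$ constructed in part (B) of the proof of Theorem~\ref{theoD13}, and then to treat the three bounded design space cases in analogy to part (C) of that proof.

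The key tool is the classical reduction of $E$-optimality to $c$-optimality: if the smallest eigenvalue $\lambda_{\min}$ of $M(\xi^*_E,\theta)$ is simple with unit eigenvector $z$, then the equivalence theorem quoted above specializes to $(z^T f_1(u,\theta))^2 \le \lambda_{\min}$ on $\U$, which is nothing but the $c$-optimality condition with $c=z$. Guided by this, I would take $c$ proportional to the coefficient vector $(1,\alpha_1,\alpha_2)^T$ of the polynomial $t(u)=f_{10}(u,\theta)+\alpha_1 f_{11}(u,\theta)+\alpha_2 f_{12}(u,\theta)$ from part (B) of the proof of Theorem~\ref{theoD13}, with $\alpha_1,\alpha_2$ as determined there. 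This vector belongs to the Kiefer--Wolfowitz set $A^*$, so the corresponding $c$-optimal design is supported at the three Chebyshev points $u_0^*=\rho^{-1}\sqrt{\theta_0/\theta_2}$, $u_1^*=\sqrt{\theta_0/\theta_2}$, $u_2^*=\rho\sqrt{\theta_0/\theta_2}$ with weights given by~\eqref{gewichtsvektor}. A straightforward rescaling produces the concrete $c$ displayed in the theorem.

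To complete the $E$-optimality argument on $[0,\infty)$, two items remain. First, the pointwise inequality $(c^T f_1(u,\theta))^2\le\lambda_{\min}$ on $[0,\infty)$: since $c^T f_1(u,\theta)$ is proportional to $t(u)$ and part (B) of the proof of Theorem~\ref{theoD13} already showed that $|t(u)|$ attains its maximum on $[0,\infty)$ exactly at the three support points and tends to $0$ as $u\to\infty$, this reduces to checking that the common extremal value squared equals $\lambda_{\min}$. Second, and this is the main obstacle, one must verify that $c$ is an eigenvector of $M(\xi^*_E,\theta)$ for its \emph{smallest} eigenvalue. A direct $3\times 3$ matrix computation using the three rank-one matrices $f_1(u_i^*,\theta)f_1^T(u_i^*,\theta)$ and the explicit weights from \eqref{gewichtsvektor} shows that $M(\xi^*_E,\theta)c=\mu\,c$ for some $\mu>0$; the identity $c^T M^{-1}(\xi^*_E,\theta)c=1/\mu$ combined with the general bound $\lambda_{\min}\le 1/(v^T M^{-1}v)$ valid for every unit vector $v$ then forces $\mu=\lambda_{\min}$, provided we confirm simplicity of $\lambda_{\min}$ by inspecting the characteristic polynomial of $M(\xi^*_E,\theta)$ at the stated support points and weights. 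If that simplicity were to fail, one would have to replace $zz^T$ by a non-trivial convex combination in $\conv(S)$, so this step is genuinely the linchpin of the argument.

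Finally, the three bounded-design-space cases \eqref{D1fall31}, \eqref{D1fall32} and \eqref{D1fall33} follow by the same reasoning as in part (C) of the proof of Theorem~\ref{theoD13}: one pins the appropriate support points at the boundaries $s$ or $t$, observes that the reciprocal of the minimum eigenvalue then becomes a smooth function of the remaining interior support points whose partial derivatives are rational functions with polynomial numerators of controlled degree, and uses the resulting zero count together with the boundary behaviour of the criterion to locate the unique interior critical point, which must be the global optimum.
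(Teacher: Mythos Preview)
Your strategy is sound in spirit but differs substantially from the paper's, and it contains a genuine gap at the step you yourself flag as the linchpin.

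The paper does not attempt any direct eigenvector computation. Instead, it verifies that every two-element subset of $\{f_{10}(\cdot,\theta),f_{11}(\cdot,\theta),f_{12}(\cdot,\theta)\}$ is a (weak) Chebyshev system on the design space, and then invokes Theorem~2.1 of Imhof and Studden (2001). That theorem delivers, in one stroke, both conclusions you are trying to establish by hand: that the local $E$-optimal design is supported at the Chebyshev points, and that it coincides with the $c$-optimal design for the vector $c$ of Chebyshev-polynomial coefficients. The weights then follow from \eqref{gewichtsvektor}. No eigenvalue simplicity check and no explicit $3\times 3$ computation are needed.

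Your direct route, by contrast, has a logical hole. After showing that $M(\xi^*_E,\theta)c=\mu c$, you argue that the identity $c^TM^{-1}(\xi^*_E,\theta)c=1/\mu$ together with the inequality $\lambda_{\min}\le 1/(v^TM^{-1}v)$ for unit $v$ ``forces $\mu=\lambda_{\min}$''. But that inequality only yields $\lambda_{\min}\le\mu$, which is already trivial since $\mu$ is an eigenvalue of $M$. Nothing in your argument gives the reverse inequality $\mu\le\lambda_{\min}$, so you have not excluded the possibility that $c$ is an eigenvector for a \emph{larger} eigenvalue. Checking simplicity of $\lambda_{\min}$ does not help here either: simplicity tells you nothing about which eigenvalue $c$ belongs to. To close this gap without Imhof--Studden you would have to compute all three eigenvalues explicitly and compare, or supply an independent duality argument; either way, this is precisely the work that the cited theorem encapsulates. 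The bounded-interval part (C) is fine in outline and parallels the paper's treatment of Theorem~\ref{theoD13}.
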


\begin{proof}
It is straightforward  to show that every subset of $\{f_{10}(u,\theta),f_{11}(u,\theta),$\\$f_{12}(u,\theta)\}$, the components of the vector $f_{1}(u,\theta)$,
 which consists of $2$ elements, is a (weak) Chebychev-system. Therefore it
 follows from Theorem 2.1 in Imhof, Studden (2001) that the local  $E$-optimal is supported at the Chebyshev
 points. The assertion regarding the weights finally follows from  (\ref{gewichtsvektor}) observing
 that the results of Imhof and Studden (2001) imply that the
 local $E$-optimal design is also $c$-optimal for the vector $c$ with components given by the
coefficients of the Chebyshev polynomial.
 \hfill $\Box$

\end{proof}

\begin{Theorem}
    The local $E$-optimal  design $\xi^*_{E}$ for the inverse quadratic regression
model (\ref{para1}) on the design space $\U=[0,\infty)$ is given by
        \begin{equation}
            \xi^*_{E}=
                \begin{pmatrix}
                    \frac{1}{\rho}\sqrt{\frac{\theta_1}{\theta_2}}&\sqrt{\frac{\theta_1}{\theta_2}}&\rho \sqrt{\frac{\theta_1}{\theta_2}} \\
                    w_0& w_1& 1-w_0-w_1
                \end{pmatrix},
        \end{equation}
where the weights $w_0$, $w_1$ are given by (\ref{gewichtsvektor}) and $c$ is the vector with components given by the coefficients of the
Chebyshev polynomial, that is
\begin{align*}
     c=&\;\Big(-1-2\sqrt{\theta_1\theta_2}-\frac{2(2\rho+\sqrt{\theta_1\theta_2}(1+\rho)^2)(\rho+\sqrt{\theta_1\theta_2}(1+\rho^2))}{(-1+\rho)^2\rho},\\[1ex]
         &\;-\frac{\sqrt{\theta_1}(1+2\sqrt{\theta_1\theta_2})(2\rho+\sqrt{\theta_1\theta_2}(1+\rho)^2)(\rho+\sqrt{\theta_1\theta_2}(1+\rho^2))}
            {\theta_0\sqrt{\theta_2}(-1+\rho)^2\rho},\\[1ex]
         &\;-\frac{\sqrt{\theta_2}(1+2\sqrt{\theta_1\theta_2})(2\rho+\sqrt{\theta_1\theta_2}(1+\rho)^2)(\rho+\sqrt{\theta_1\theta_2}(1+\rho^2))}
            {\theta_0\sqrt{\theta_1}(-1+\rho)^2\rho}\Big)^T.
 \end{align*}
The geometric scaling factor is given  by (\ref{scalefac})
    with $\gamma=(\sqrt{\theta_1\theta_2})^{-1}$.   This design is also local $E$-optimal on the design space ${\cal U}=[s,t]$ ($0<s<t$), if
    the inequalities $0\leq s \leq
    \rho^{-1}\sqrt{\theta_1/\theta_2}$
    and $t \geq \rho \sqrt{\theta_1/\theta_2}$ are satisfied.\\
 The local $E$-optimal  design on the design space ${\cal U}=[s,t]$ for
    model (\ref{para3}) is of the form (\ref{D1fall31}),
if the inequalities $s \geq \rho^{-1}\sqrt{\theta_1/\theta_2}$ and $t > \rho
              \sqrt{\theta_1/\theta_2}$ hold, of the form (\ref{D1fall32}),
if the inequalities $s < \rho^{-1}\sqrt{\theta_1/\theta_2}$ and $t \leq \rho
                \sqrt{\theta_1/\theta_2}$  are satisfied, and of the form (\ref{D1fall33}),
if the inequalities $s \geq \rho^{-1}\sqrt{\theta_1/\theta_2}$ and $t \leq \rho \sqrt{\theta_1/\theta_2}$ hold.
\end{Theorem}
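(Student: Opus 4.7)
The plan is to mirror the strategy used for Theorem \ref{theoe3}, replacing the gradient $f_1$ by the gradient $f_2$ in \eqref{grad1}. The key engine is Theorem 2.1 of Imhof and Studden (2001), which asserts that if every two-element subset of the components of the regression gradient forms a weak Chebyshev system on $\U$, then the local $E$-optimal design is supported at the Chebyshev points of the full system, and it coincides with the $c$-optimal design whose $c$-vector consists of the coefficients of the Chebyshev polynomial. Thus the proof splits into three tasks: verify the weak Chebyshev property for the subsystems of $f_2$; identify the three Chebyshev points and the Chebyshev-coefficient vector; and then read off the weights from the formula \eqref{gewichtsvektor}.

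First I would check that each two-element subset of $\{f_{20}(u,\theta), f_{21}(u,\theta), f_{22}(u,\theta)\}$ is a weak Chebyshev system on $[0,\infty)$. Since the three components share the common positive factor $u/(\theta_1+u+\theta_2 u^2)$ and differ by multiplication by $1$, $-\theta_0/(\theta_1+u+\theta_2 u^2)$, or $-\theta_0 u^2/(\theta_1+u+\theta_2 u^2)$, the relevant $2\times 2$ determinants reduce to polynomials in $u$ whose sign-changes can be counted explicitly. The full system is a Chebyshev system by the same determinantal argument as in part (A) of the proof of Theorem \ref{theoD13}, and its Chebyshev points coincide with the support points appearing in Theorem \ref{theoD11}, namely $\rho^{-1}\sqrt{\theta_1/\theta_2}$, $\sqrt{\theta_1/\theta_2}$, $\rho\sqrt{\theta_1/\theta_2}$, with $\rho$ given by \eqref{scalefac} and $\gamma=(\sqrt{\theta_1\theta_2})^{-1}$. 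Indeed, the equioscillation argument from part (B) of that proof applies verbatim once the partial derivatives are rewritten for parameterization \eqref{para1}, producing the same triple of roots via a factored form analogous to \eqref{ableittfertig}.

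Given the support, the Imhof--Studden correspondence turns the weight computation into an application of \eqref{gewichtsvektor} with the Chebyshev-coefficient vector $c$ explicitly displayed in the statement of the theorem; substituting the three Chebyshev points into the $3\times 3$ matrix $X=(g_j(s_i))_{i,j=0}^2$ and solving $v=(XX^T)^{-1}Xc$ yields $w_0, w_1, 1-w_0-w_1$ after routine but bulky simplification. The bounded-design-space cases \eqref{D1fall31}–\eqref{D1fall33} are handled exactly as in part (C) of the proof of Theorem \ref{theoD13}: when the interval $[s,t]$ excludes one of the Chebyshev points, monotonicity of the $E$-optimality criterion in that truncated coordinate forces the boundary endpoint to become the corresponding support point, and the interior support points are recomputed from the stationarity conditions.

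The main obstacle is the algebraic verification of the weak Chebyshev property for every two-element subsystem of $f_2$, since the gradient for parameterization \eqref{para1} involves nonlinear dependence on $\theta_0$ and has a more intricate rational structure than \eqref{grad3}. Once this is in place, the rest is a matter of specialization: identification of the Chebyshev points is dictated by the same factorization that drove Theorem \ref{theoD11}, and the weight formula is an explicit linear-algebra computation.
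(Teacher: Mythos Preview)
Your proposal is correct and follows essentially the same approach as the paper: verify that every two-element subset of the components of $f_2$ is a weak Chebyshev system, invoke Theorem~2.1 of Imhof and Studden (2001) to conclude that the local $E$-optimal design is supported at the Chebyshev points and coincides with the $c$-optimal design for the Chebyshev-coefficient vector, and then apply \eqref{gewichtsvektor} for the weights, with the bounded-interval cases handled by the monotonicity argument from part (C) of the proof of Theorem~\ref{theoD13}. The paper in fact omits an explicit proof for this theorem, referring to the analogous argument for Theorem~\ref{theoe3}, which is exactly what you have reproduced.
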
\par

 \setcounter{chapter}{5}
\setcounter{equation}{0} 
\noindent {\bf 5. Further discussion}

In this Section we discuss some practical aspects of the local optimal designs derived in the previous sections. In particular, we calculate
the efficiency of a design, which has
 recently been used in practice and investigate
the efficiency of local optimal designs with respect to other optimality criteria. Throughout this paper the efficiency of a design $\xi$ is defined by $\eff_\Phi(\xi)=\Phi(\xi)/\sup_{\eta}\Phi(\eta)$, where $\Phi$ denotes the particular optimality criterion under consideration and the optimal design maximizes $\Phi$.

 Landete-Castillejos and Gallego (2000) used the inverse quadratic regression model
 to analyze data, which  were obtained from  lactating red deer hinds
(Cervus elaphus). They concluded that inverse quadratic polynomials with
parameterization (\ref{para3}) can adequately
 describe the common lactation curves. The design space was given by the interval $\mathcal{U} = [1,14]$, and the design used by these authors
 was a uniform design with support points $(1,2,3,4,5,6,10,14)$, which is denoted by
 $\xi_u$ throughout this section. The  estimates for the parameters of model
 (\ref{para3}) are given by  $ \hat{\theta}_0=0.0002865$, $ \hat{\theta}_1=0.0002117$ and $\hat{\theta}_2=0.0000301$.
Table {5.1}  shows the local optimal designs for the different optimality criteria considered in Section 3 and 4,
 where we used the point $x_e=21$ for the calculation of the optimal extrapolation
 design.

\bigskip

\begin{tabularx}{340pt}{|c|>{\centering\arraybackslash}X@{:}>{\centering\arraybackslash}X >{\centering\arraybackslash}X >{\centering\arraybackslash}X|>{\centering\arraybackslash}X|}
\hline
{\bf Criterion}&\multicolumn{4}{c|}{\bf Optimal design} \\
\hline
&points&1&3.4089&14\\
\raisebox{1.5ex}[-1.5ex]{${\bf D}$}&weights&1/3&1/3&1/3\\
\hline
&points&1&3.3561&14\\
\raisebox{1.5ex}[-1.5ex]{${\bf E}$}&weights&0.3972&0.3914&0.2114\\
\hline
&points&1&3.3561&14\\
\raisebox{1.5ex}[-1.5ex]{${\bf D_1}$}&weights&0.1239&0.2884&0.5877\\
\hline
&points&1&3.3561&14\\
\raisebox{1.5ex}[-1.5ex]{${\bf c_e}$}&weights&0.0582&0.1535&0.7883\\
\hline
\end{tabularx}

\medskip\medskip\medskip
\noindent
{\bf Table 5.1.} {\it $D$-, $E$-, $D_1$- and $c_e$-optimal designs for parametrization (\ref{para3}).}

\bigskip

The efficiencies of the different designs are shown in Table {5.2}. We observe that the design of  Landete-Castillejos and Gallego (2000)
 yields rather low efficiencies with respect to all optimality
 criteria, and the efficiency of the statistical analysis could have been improved by allocating observations according to local optimal design [see the first row in Table {5.2}].
 For example a confidence interval based on the local
 $D_1$-optimal design would yield $66\%$ shorter confidence intervals for the
 parameter $c$ as the design actually used by   Landete-Castillejos and Gallego (2000).
 The advantages of the local optimal designs are also clearly visible for the other criteria.

\bigskip

\begin{tabularx}{340pt}{|c|>{\centering\arraybackslash}X|>{\centering\arraybackslash}X|>{\centering\arraybackslash}X|>{\centering\arraybackslash}X|}
\hline
&${\bf D}$&${\bf E}$&${\bf D_1}$&${\bf c_e}$\\
\hline
${\bf \xi_u}$&69.92&50.33&45.85&33.82\\
\hline
${\bf \xi^*_D}$&100 &94.18&75.28&43.60\\
\hline
${\bf \xi^*_E}$&93.96&100&51.89&25.71\\
\hline
${\bf \xi^*_{D_1}}$&74.63&53.05&100&80.40\\
\hline
${\bf \xi^*_{c_e}}$&51.23&33.24&85.73&100\\
\hline
\end{tabularx}

\medskip\medskip\medskip
\noindent {\bf Table 5.2.} {\it Efficiencies of local optimal designs and the uniform design $\xi_u$ for the inverse quadratic model (parameterization (\ref{para3})) with
respect to various alternative criteria (in percent). The design space is the interval $\mathcal{U} = [1,14]$, and the estimates of the
parameters are given by $\hat{\theta}_0=0.0002865$, $\hat{\theta}_1=0.0002117$ and $\hat{\theta}_2=0.0000301$. The local extrapolation optimal design is calculated for
the point $x_e=21$. }

Note that the data is usually used for several purposes, for example for discrimination between a linear and a quadratic inverse polynomial and
for extrapolation using the identified model. Therefore it is important that an optimal design for a specific optimality criterion yields also
reasonable efficiencies with respect to alternative criteria, which reflect other aspects of the statistical analy-\\sis. In Table 5.2 we also compare
the efficiency of a given local optimal design with respect to the other optimality criteria. For example, the local $D$-optimal design has efficiencies $94.18\%$, $75.28\%$ and $43.60\%$ with
respect to the $E$-, $D_1$- and $c_e$-optimality criterion, respectively. Thus this design is rather efficient for the $D_1$- and $E$-optimality
criterion, but less efficient for extrapolation.  The situation for the $D_1$-optimal design is similar, where the role of the $c_e$- and
$E$-criterion have to be interchanged. On the other hand the performance of the local $E$- and $c_e$-optimal design depends strongly on the
underlying optimality criterion. The local $E$-optimal design yields only a satisfactory $D$-efficiency, but is less efficient with respect to
the $c_e$- and $D_1$-optimality criterion, while the local $c_e$-optimal design yields only a satisfactory $D_1$-efficiency.

\newpage
\noindent {\large\bf Acknowledgment}
The support of the Deutsche Forschungsgemeinschaft (SFB 475, \textquotedblleft
Komplexit\"{a}tsreduktion in multivariaten Datenstrukturen\textquotedblright)
is gratefully acknowledged. The work of the authors was also supported in part by
an NIH grant award IR01GM072876:01A1 and by the BMBF project SKAVOE.
 The authors are also grateful to two referees and the associate editor for their constructive
 comments on an earlier version of this paper and to M.\ Stein,
who typed parts of this paper with considerable technical expertise.
\par

\bigskip
\noindent{\large\bf References}
\begin{description}
\item
 Cobby, J. M., Chapman, P. F. and Pike, D.J. (1986). Design of experiments for estimating inverse quadratic polynomial responses. {\it Biometrics} {\bf 42}, 659-664.

\item
    Dette, H. (1995). Optimal designs for identifying the degree of a polynomial regression. {\it Annals of Statistics} {\bf 23}, 1248-1267.

\item
    Dette, H. and Biedermann, S. (2003). Robust and efficient designs for the Michaelis-Menten model. {\it Journal of the American Statistical Association} {\bf 98}, 679-686.

\item
    Dette, H., Melas, V. B. and Pepelyshev, A. (2003). Standardized maximin-optimal designs for the Michaelis-Menten model. {\it Statistica Sinica}, {\bf 13} 1147-1163.


\item
Haines, L. M. (1992). Optimal design for inverse quadratic polynomials. {\it South African Statistical Journal} {\bf 26}, 25-41.

\item
    Imhof, L. A. and Studden, W. J. (2001). $E$-optimal designs for rational models. {\it Annals of Statistics} {\bf 29}, No. 3, 763-783.

\item
    Jennrich, R. I. (1969). Asymptotic properties of nonlinear regression. {\it Annals of Mathmatical Statistics} {\bf 40}, No. 2, 633-643.

\item
    Karlin, S. and Studden, W. J. (1966). {\it Tschebycheff-Systems: with applications in analysis and statistics}. Wiley, New York.

\item
Kiefer, J. (1974). General equivalence theory for optimum designs (approximate theory). {\it Annals of Statistics} {\bf  2}, 849-879.

\item
Kiefer, J. and Wolfowitz, J. (1965). On a theorem of Hoel and Levine on extrapolation designs. {\it Annals of Mathmatical Statistics} {\bf 36}, 1627-1655.

\item
    Landete-Castillejos, T. and Gallego, L. (2000). Technical note: The ability of mathmatical models to describe the shape of lactation curves. {\it Journal of Animal Science} {\bf 78}, 3010-3013.

\item
Lopez-Fidalgo, J. and Wong, W. K. (2002). Design issues for the Michaelis-Menten model. {\it Journal of Theoretical Biology} {\bf 215}, 1-11.

\item
    Nelder, J. A. (1966). Inverse polynomials, a useful group of  multifactor response functions. {\it Biometrics} {\bf 22}, 128-141.

\item
 Pukelsheim, F. (1993). {\it Optimal design of experiments}. John Wiley \& Sons Inc., New York.

\item
Pukelsheim, F. and Torsney, B. (1991). Optimal weights for experimental designs on lineary independent support points. {\it Annals of Statistics} {\bf 19},1614-1625.

\item
    Pukelsheim, F. and Rieder, S. (1992). Efficient rounding of approximate designs. {\it Biometrica} {\bf 79}, 763-770.


\item
    Ratkowsky, D. A. (1990). {\it Handbook of nonlinear regression models}. New York: Marcel Dekker.

\item
    Serchand, L., McNew, R. W., Kellogg, D. W. and Johnson, Z.B. (1995). Selection of a mathmatical model to generate lactation curves using daily milk yields of Holstein cows. {\it Journal of Dairy Science} {\bf 78}, 2507-2513.
\item
Silvey, S. D. (1980). {\it Optimal design}. Chapman and Hall, New York.

\item
    Sparrow, P. E. (1979a). Nitrogen response curves of spring barley. {\it Journal of Agricultural Science} {\bf 92}, 307-317.

\item
Sparrow, P. E. (1979b). The comparison of five response curves for representing the relationship between the annual dry-matter yield of grass herbage and fertilizer nitrogen. {\it Journal of Agricultural Science} {\bf 93}, 513-520.

\item
Song, D. and Wong, W. K. (1998). Optimal designs for the Michaelis-Menten model with heterodastic errors. {\it Communications in Statistics - Theory and Methods} {\bf 27}, 1503-1516.

\item
    Stigler, S. (1971). Optimal experimental design for polynomial regression. {\it Journal of the American Statistical Association} {\bf 66}, 311-318.

\item
    Studden, W. J. (1982). Some robust type $D$-optimal designs in polynomial regression. {\it Journal of the American Statistical Association} {\bf 77}, 916-921.
\end{description}


\vskip .65cm
\noindent
Ruhr-Universität Bochum, Fakultät für Mathematik, 44780 Bochum, Germany
\vskip 2pt
\noindent
E-mail: (holger.dette@ruhr-uni-bochum.de)
\vskip 2pt

\noindent
Ruhr-Universität Bochum, Fakultät für Mathematik, 44780 Bochum, Germany
\vskip 2pt
\noindent
E-mail: (tina.kiss12@googlemail.com)
\vskip .3cm


\end{document}